\newtheorem{theorem}{Theorem}
\newtheorem{remark}{Remark}
\newtheorem{lemma}{Lemma}
\def\mbf{\mathbf}
\def\mc{\mathcal}
\newcommand{\bigO}{\mc{O}}
\newcommand{\E}{\mathbb{E}}
\renewcommand{\P}{\mathbb{P}}
\newcommand{\bern}{\mathsf{Bernoulli}}
\newcommand{\wt}{\mathop{\mathrm{wt}_{\mathrm{H}}}\nolimits}
\newcommand{\dH}{d_{\mathrm{H}}}
\newcommand{\cX}{\mc{X}}
\newcommand{\cY}{\mc{Y}}
\newcommand{\cZ}{\mc{Z}}
\newcommand{\cC}{\mc{C}}
\newcommand{\cT}{\mc{T}}
\newcommand{\cTp}{{{\mc{T}}'}}
\newcommand{\bX}{\mbf{X}}
\newcommand{\bY}{\mbf{Y}}
\newcommand{\bZ}{\mbf{Z}}
\newcommand{\be}{\mbf{e}}
\newcommand{\bu}{\mbf{u}}
\newcommand{\bx}{\mbf{x}}
\newcommand{\by}{\mbf{y}}
\newcommand{\bz}{\mbf{z}}
\newcommand{\e}{\varepsilon}
\newcommand{\renc}{\Phi}
\newcommand{\ddec}{\psi}
\newcommand{\delay}{\Delta}
\newcommand{\cost}{c}
\newcommand{\Perr}{P_{\mathsf{err}}}
\newcommand{\jamset}{\mc{G}}
\newcommand{\er}{\perp}
\newcommand{\uner}{u}
\newcommand{\capgap}{\epsilon}
\newcommand{\decslack}{\capgap/2}
\newcommand{\setcoh}{\eta_1}
\newcommand{\codecoh}{\eta_2}
\newcommand{\cohbound}{\eta_3}
\newcommand{\decT}{ \tau }
\newcommand{\chunk}[2]{ S(#1,#2) }
\newcommand{\nchunk}[1][]{%
\ifthenelse{\equal{#1}{}}{S}{S(#1)}%
}
\newcommand{\suppress}[1]{}
\title{The benefit of a $1$-bit jump-start, and the necessity of stochastic encoding, in jamming channels}
\author{Bikash Kumar Dey, Sidharth Jaggi, Michael Langberg, Anand D. Sarwate}
\date{\today}
\begin{document}

\maketitle
\begin{abstract}
We consider the problem of communicating a message $m$ in the presence of a malicious jamming adversary (Calvin), who can erase an arbitrary set of up to $pn$ bits, out of $n$ transmitted bits $\bX = (x_1,\ldots,x_n)$. The capacity of such a channel when Calvin is {\it exactly causal}, {\it i.e.} Calvin's decision of whether or not to erase bit $x_i$ depends on his observations $(x_1,\ldots,x_i)$ was recently characterized~\cite{bassily_causal_2014,chen_characterization_2015} to be $1-2p$. In this work we show two (perhaps) surprising phenomena. Firstly, we demonstrate via a novel code construction that if Calvin is {\it delayed} by even a single bit, {\it i.e.} Calvin's decision of whether or not to erase bit $x_i$ depends only on $(x_1,\ldots,x_{i-1})$ (and is independent of the ``current bit'' $x_i$) then the capacity increases to $1-p$ when the encoder is allowed to be stochastic. Secondly, we show via a novel jamming strategy for Calvin that, in the single-bit-delay setting, if the encoding is deterministic ({\it i.e.} the transmitted codeword $\bX$ is a deterministic function of the message $m$) then no rate asymptotically larger than $1-2p$ is possible with vanishing probability of error; hence {\it stochastic encoding} (using private randomness at the encoder) is essential to achieve the capacity of $1-p$ against a one-bit-delayed Calvin.
\end{abstract}

\section{Introduction}

There are two traditional methods in information theory for modeling uncertainty in communication channels. Shannon's approach treats uncertainty in the channel as a random phenomenon and requires the probability of decoding error to vanish as the blocklength tends to infinity~\cite{shannon_mathematical_1949}. The capacity is governed by the behaviour of typical channel realizations; for example in a binary erasure channel (BEC) with erasure probability $p$, the channel will erase \textit{approximately} $p n$ symbols as $n \to \infty$.  Classical error-control coding, which we might call Hamming's approach, considers the problem of worst-case recovery. Assuming the channel erases \textit{at most} $p n$ symbols, the goal is to design codes that can exactly recover the transmitted message.

One way to view the differences between these two models is to anthropomorphize the channel and assume it is being controlled by an adversary (whom we call Calvin), who wishes to foil the communication between the transmitter and receiver (hereafter referred to as Alice and Bob).  By restricting the information available to Calvin we can recover models for communication in these two regimes. This information could be about the transmitted message or the codeword itself.
For example, a BEC could be modeled by an \textit{oblivious} adversary who knows neither the message nor the codebook used by the transmitter and receiver, and is restricted to erase no more than $p n$ symbols as $n \to \infty$. In the BEC we allow for some probability of error (average or maximum over messages) that tends to $0$ as $n \to \infty$.
The Hamming approach is more pessimistic: Calvin knows the transmitted message, codeword, and codebook, and can adversarially choose up to $p n$ positions to erase to create uncertainty at the decoder. A good code in the Hamming sense protects against all such erasure attacks and guarantees zero error subject to the adversary's constraint.

The advantage of this (perhaps paranoid) adversarial modeling is that it reveals a plethora of intermediate models between the Shannon and Hamming models that can potentially shed light on the difference between average and worst case analysis. An arbitrarily varying channel (AVC)~\cite{blackwell_capacities_1960} has a time-varying state (e.g. the presence/absence of an erasure) that can be chosen by Calvin. In AVC models, distinctions between error criteria (maximum or average) and the presence of common randomness shared by Alice and Bob become important~\cite{lapidoth_reliable_1998}. Sometimes the AVC capacity displays a dichotomy: if Calvin can simulate sending a legitimate message, then Bob may not be able to decode correctly. Such AVCs are called \textit{symmetrizable}, and the capacity is $0$ in this case~\cite{csiszar_capacity_1988}.

In this paper we find a new dichotomy when Calvin can observe the transmitted codeword subject to some~\textit{delay}. That is, at time $i$, Calvin has knowledge of the transmitted codeword up to time $i - \delay$. In particular, we study the case $\delay = 1$ for a model in which Calvin can erase at most $p n$ of the transmitted bits. If the encoder and decoder share common randomness, then prior work shows that the capacity in this model is $1 - p$, the same as the BEC capacity~\cite{cover_elements_2012}. 

We show that in our model of study with $\delay = 1$ the capacity  is $1 - p$. Our coding scheme uses randomness at the encoder. Specifically, for any rate below $1 - p$, the maximum probability of error (over the encoder randomness) goes to $0$ as $n \to \infty$. This result may come as a surprise, as the capacity for $\delay = 0$ is strictly lower and equals $1 - 2p$~\cite{bassily_causal_2014,chen_characterization_2015}. Moreover, we show encoder  randomness is essential by proving that any deterministic coding scheme will have capacity at most $1-2p$ for $\Delta=1$. 
In contrast, we show in Sec.~\ref{sec:stochdet} that for omniscient adversary, 
who has noncausal knowledge of the full codeword, the capacity under stochastic
encoding is the same as that under deterministic encoding.

\begin{figure}[t!]
\begin{center}
%\vspace{-1.5cm}
\hspace*{-5mm}\includegraphics[scale=0.4]{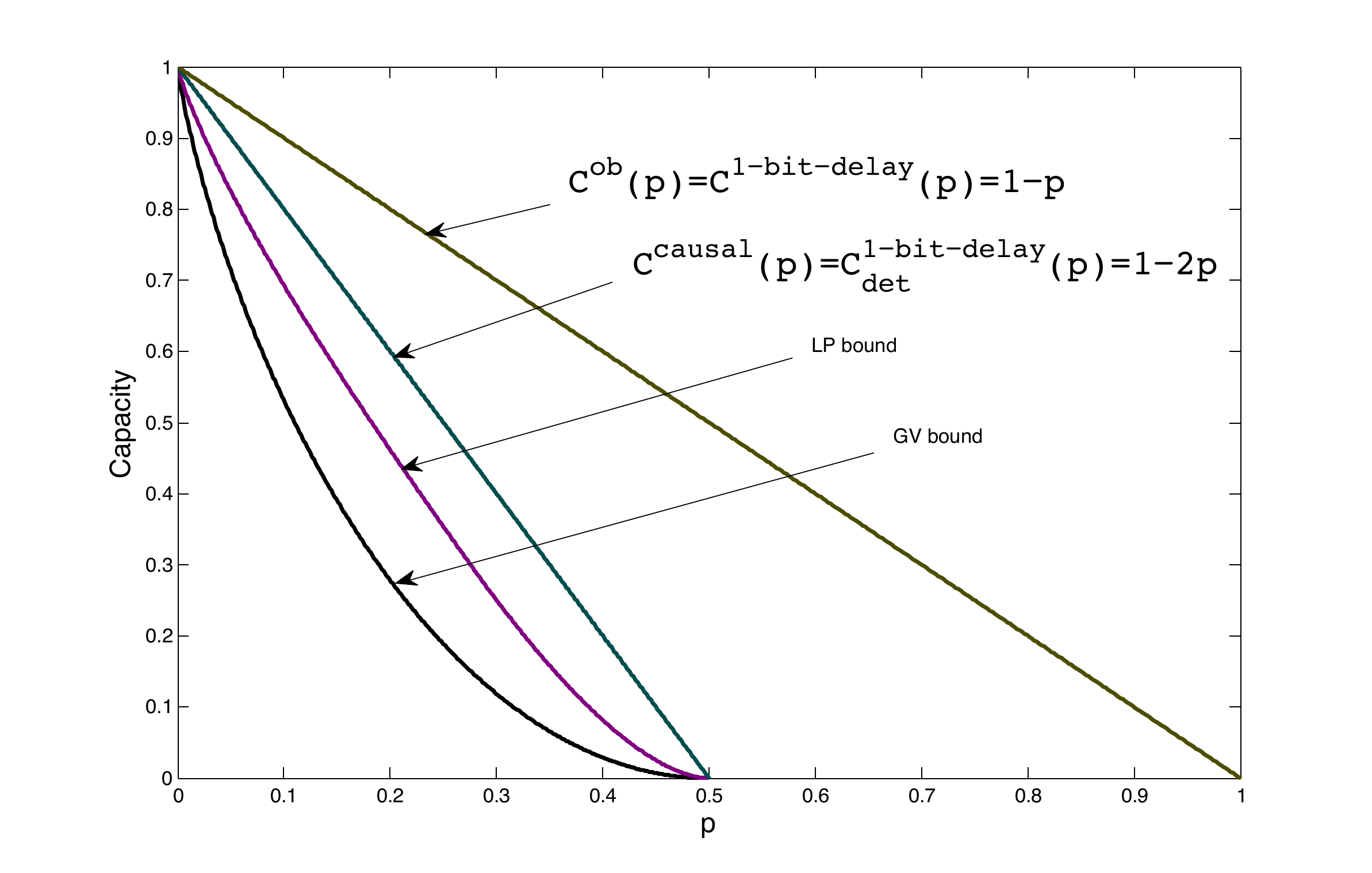}
\vspace{-6mm}
\caption{Binary adversarial erasure channels.%: The bound of $1-p$ (in blue) corresponds to the capacity of
%the binary oblivious erasure channel~\cite{csiszar_capacity_1988}, and also the capacity of the
%one-bit jump-start channel analyzed in Theorem~\ref{th:stochastic} in this paper.
%The LP bound \cite{mceliece_new_1977} and the GV bound \cite{gilbert_comparison_1952,varshamov_estimate_1957} (both in dotted black) are the best known upper
%and lower bounds for binary omniscient erasure channels.  The capacity of {\it
%exactly} causal erasure channels (or indeed even channels with a ``look-ahead''
%of at most $\epsilon n$ bits) was established to equal $1-2p$ -- the converse
%was established in~\cite{bassily_causal_2014}, and the achievability
%in~\cite{chen_characterization_2015} -- this curve is plotted in red, and equals the
%optimal throughput possible if one uses deterministic codes, as shown in
%Theorem~\ref{th:deterministic}. 
} \label{fig:rate-region}
\end{center}
\vspace{-6mm}
\end{figure}

\subsection{Prior work and contributions}

%There are many ways to limit or constrain the knowledge and capabilities of the adversary.
%, including computational limits~\cite{guruswami_codes_2010} and dependence \cite{langberg_oblivious_2008}.  
We focus on two aspects of communication models with adversaries: the impact of delay on the knowledge of the adversary, and the difference between deterministic and stochastic encoding. The first paper to our knowledge that examined these issues was by Ahlswede and Wolfowitz~\cite{AhlswedeW:69correlated}, who gave several equivalences between classes of AVC models and further showed that stochastic encoding alone can have some benefit over deterministic encoding. Traditional works on the AVC~\cite{lapidoth_reliable_1998} focused on the case where Calvin is oblivious ($\Delta = n$); for erasure adversaries the capacity for average error and deterministic codes is $1 - p$~\cite{csiszar_capacity_1988}. If the encoder and decoder share common randomness then the capacity is $1 - p$ even if Calvin is omniscient ($\Delta = -n$)~\cite{sarwate_rateless_2010}. For deterministic codes the the best-known achievable rate equals $1-H(p)$ via GV codes~\cite{gilbert_comparison_1952,varshamov_estimate_1957} and the best-known outer bound is given by the LP bound~\cite{mceliece_new_1977}. 

Our results show a sharp difference between $\Delta = 0$ and $\Delta = 1$. For $\Delta=0$, Bassily and Smith proved an outer bound of $1 - 2 p$ and Chen et al.~\cite{chen_characterization_2015} constructed a code with stochastic encoding that achieves $1 - 2p$. For $\Delta = -\epsilon n$ (that it, $\epsilon n$-lookahead) this code achieves $1-2p-\epsilon$, showing that sublinear lookahead cannot improve Calvin's jamming strategy. We find the capacity for $\Delta=1$ (hence the title ``one-bit delay'') is $1-p$, thereby establishing the same result for all positive $\Delta$. This demonstrates a very sharp asymmetry between the effect of lookahead and delay! The capacities of the problems in the spectrum above are plotted in Figure~\ref{fig:rate-region}.

The second issue we address is the importance of stochastic encoding (private randomization). The first paper on AVCs~\cite{blackwell_capacities_1960} considered the case $\Delta = 1$ with full common randomness, but their proof does not extend to constrained adversaries~\cite{CsiszarN:88constraints}. Most AVC results focus on the difference between common randomness and deterministic coding for oblivious~\cite{blackwell_capacities_1960,ahlswede_elimination_1978,CsiszarN:88constraints,csiszar_capacity_1988} or omniscient~\cite{langberg_private_2004,smith_scrambling_2007,sarwate_rateless_2010} adversaries. Stochastic encoding offers
few benefits in these settings for DMCs or AVCs, although it is useful in
wiretap scenarios~\cite{wyner_wire-tap_1975}. In this paper we show that for
$\Delta = 1$ deterministic codes cannot achieve rates higher than $1 - 2p$
whereas stochastic encoding can achieve a rate $1 - p$. This shows that
stochastic encoding is essential for the specific channel considered in this
work. A related (and fascinating) open question is whether the same is true for
exactly causal binary erasure channels -- the rate-optimal
codes~\cite{chen_characterization_2015} achieving $1-2p$ used stochastic
encoding, and it is unclear whether the same rate is achievable via
deterministic codes. However, for an omniscient adversary, the capacity
under stochastic encoding is argued to be the same as that under deterministic
encoding in Sec.~\ref{sec:stochdet}.

\subsection{For comparison: Large alphabet channels}

Often, analyzing ``large alphabet'' channels (where the channel input/output alphabet sizes are larger than the blocklength $n$) gives one insight about general channels, including channel models that are challenging to characterize (such as binary channels). 

For large alphabet erasure channels, the situation is somewhat different than considered in this paper. If the input alphabet $\cX$ is of size $q$ which is at least $n$, and at most a $p$-fraction of symbols may be erased, the capacity is exactly $1-p$. (This equals the capacity of the $q$-ary random erasure channel, in which the erasure probability of each symbol is $p$.) This rate is attainable regardless of the knowledge of the adversary, and computationally-efficiently attainable by (deterministic) Reed-Solomon codes; hence neither of the behaviours observed in the binary adversarial erasure channel is observed here.

On the other hand, for large alphabet symbol errors when at most $pn$ output symbols may differ from the input symbols, we may observe similar behaviour to the binary erasure case. In~\cite{dey_codes_2013}, it was demonstrated that the capacity of exactly causal channels equals $1-2p$, which is the same as the capacity if the adversary is omniscient -- hence there is no advantage of lookahead for the adversary. In both cases we can achieve these rates using computationally efficient  (and deterministic) Reed-Solomon codes. However, if the adversary is delayed, then, depending on the symbol-error model, the capacity may be higher. Two symbol-error models were considered. When symbol errors are {\it additive} (the output symbol $y_i$ equals $x_i + e_i$, where $x_i$ is the input symbol, $e_i$ is the error symbol,  at most $pn$ $e_i$s may be non-zero, and all symbols and addition are over the finite field ${\mathbf F}_q$) with a delay of even a single symbol, the capacity equals $1-p$ (thereby exhibiting a similar phase-transition in the capacity as in this paper). In contrast, with {\it overwrite} errors (the output symbol $y_i$ equals $e_i$ for at most $pn$ non-zero $e_i$s) with a delay of $dn$ symbols ($e_i$ can be a function of $x_1,\ldots,x_{i-dn}$), the capacity is $1-2p+d$ for $p<1/2$, and $0$ otherwise, hence demonstrating a less sharp transition in the throughput.

The differences in optimal rates obtainable with stochastic and deterministic encoding over large alphabets with causally-constrained adversaries has not, to the best of our knowledge, been considered in the literature and may be worthy of investigation.

\section{Channel model}

For integers $r < s$ let $[r:s]$ denote the set
$\{r,r+1,\ldots, s\}$ and let $[N]$ denote the set $[1:N]$.  For a set $S
\subseteq [n]$, let $\bar{S}$ be the complement of $S$. Let $\er$ denote the
erasure output symbol. Random variables will typically be denoted by capital letters
and vectors by boldface. For a vector $\bz = (z_1, z_2, \ldots, z_n)$ and set $S
\subseteq [n]$ we will write $\bz_{S}$ for the vector $(z_i)_{i \in S}$ with
the components ordered in increasing order of index. The Hamming weight of a 
binary vector $\bz$ is $\wt(\bz)$, and Hamming distance is $\dH$.

%\subsection{General model}

We first set up our channel model more generally before 
specializing to the case considered in this writeup.

Let $\cX$,$\cY$,and $\cZ$ be discrete alphabets. We consider variants on arbitrarily varying channel models, which are channels whose state $z \in \cZ$ is (partially) controlled by a malicious adversary who wishes to prevent reliable communication across the channel. The model is parameterized by a set of discrete channels $\{ W(y | x, z) : x \in \cX, y \in \cY, z \in \cZ \}$. For blocklength $n$, input $\bx \in \cX^n$, state $\bz \in \cZ^n$, and output $\by \in \bY^n$, the blocklength-$n$ extension of this channel is
	\begin{align}
	W(\mbf{y} | \mbf{x},\mbf{z}) = \prod_{i=1}^{n} W(y_t | x_t, z_t).
	\end{align}
	
An $(n, 2^{nR})$ code with randomized encoding for this channel is a pair of maps $(\renc,\ddec)$ where $\Phi : [2^{nR}] \to \cX^n$ is a randomized encoding map and $\psi : \cY \to \{0\} \cup [2^{nR}]$ is a deterministic decoding map. 
In a deterministic code, the encoder is also deterministic, and it assigns
a unique codeword to each of the $2^{nR}$ messages.

We consider channel models in which $\bz$ is chosen adversarially and with partial knowledge of the transmitted codeword. We define an adversarial strategy $\Gamma$ of delay $\delay$ to be a sequence of maps $\{\gamma_t : t \in [n]\}$, where $\gamma_t : \cX^{t-\delay} \to \cZ$ is a randomized map from $\bx_{[1:(t - \delay)]}$ to $z_t$. We allow this map to depend on the code $(\renc,\ddec)$. Alternatively, such a strategy defines a conditional probability distribution $G( z_t | \bx_{[1:(t - \delay)]}, \renc,\ddec)$ which chooses $Z_t$. The corresponds to a scenario where the adversary can observe the channel input up to delay $\delay$ and can choose the channel state based on that information and the structure of the code.  We say the strategy satisfies a cost constraint $p$ with respect to the cost function $\cost : \cZ \to \mathbb{R}^{+}$ if 
	\begin{align}
	\sum_{t=1}^{t} c(Z_t) \le p n. 
	\end{align}
Let $\jamset(p,\renc,\ddec)$ be the set of of strategies that satisfies the cost constraint.

The probability of error for this code on message $m \in [2^{nR}]$ with adversarial strategy $\Gamma$ is
	\begin{align}
	\Perr(m,\Gamma) = \sum_{\by : \ddec(\by) \neq m} \sum_{\bx\in\cX^n} \sum_{\bz \in \cZ^n} 
		W(\by|\bx,\bz) \P(\renc(m) = \bx) \prod_{t=1}^{n} G(z_t | \bx_{[1:(t - \delay)]}, \renc,\ddec).
	\end{align}
The maximum probability of error is 
	\begin{align}
	\Perr = \max_{\Gamma \in \jamset(p,\renc,\ddec)} \max_{m \in [2^{nR}]} \Perr(m,\Gamma).
	\end{align}
Note that these probabilities are over the encoder randomness in $\renc$, potential randomness in the adversary strategy $\Gamma$, and possible randomness
in the channel.
We say a rate $R$ is achievable in this model if there exists a sequence of $(n, 2^{\lfloor nR \rfloor})$ codes such that $\Perr \to 0$ as $n \to \infty$. The capacity is the supremum of the set of achievable rates.

Here we take $\cX = \{0,1\}$, $\cZ = \{0,1\}$, and $\cY = \{0,1,\perp\}$. The channel model is given by $y_t = x_t$ if $z_t = 0$ and $y_t = \perp$ if $z_t = 1$. The cost function is $c(z) = z$ and $\delay = 1$. This corresponds to a binary-input channel in which the adversary can observe all past inputs and can erase up to $p n$ of the bits. Under a larger delay the capacity is $1 - p$. However, for delay $0$ the capacity is $1 - 2p$. In the remainder of the paper we will show that with stochastic encoding the capacity is $1 - p$ and with deterministic encoding the capacity is at most $1 - 2p$.

Our main results take the form of two theorems. Theorem \ref{th:stochastic} is an achievability result: it says that the stochastic encoding can achieve rate $1 - p$ against a bit-erasing adversary who can erase up to $p n$ bits and is subject to delay $\delay = 1$. 

\begin{theorem}
\label{th:stochastic}
The capacity of a binary channel with a bit-erasing adversary who can
erase up to $p$ fraction of a codeword based on
causal $1$-bit-delayed observation is $1-p$.
\end{theorem}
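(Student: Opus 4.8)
The statement couples a converse with an achievability claim; I would prove the two halves separately.

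\textbf{Converse.} To see that the capacity is at most $1-p$ it suffices to invoke the weakest possible jammer: the strategy $\Gamma_0$ that ignores its observations and erases a uniformly random size-$\lceil pn\rceil$ subset $S\subseteq[n]$ is a legitimate delay-$1$ strategy in $\jamset(p,\renc,\ddec)$, so it alone already caps the achievable rate. Since Bob sees $S$ through the erasure symbols, $I(m;\bY)\le H(\bX_{\bar S}\mid S)\le|\bar S|\le(1-p)n$ for any code, and Fano's inequality then forces $R\le 1-p+o(1)$ for any code sequence with $\Perr\to0$. (This is nothing more than the BEC converse.)

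\textbf{Achievability.} The bulk of the work is to show that rate $1-p-\e$ is achievable for every $\e>0$ with a \emph{stochastic} code. The guiding observation is that an \emph{oblivious} erasure jammer only yields capacity $1-p$ --- a capacity-achieving BEC code decodes from a typical set of $pn$ erasures --- so it suffices to build a stochastic code whose private randomness forces even an adaptive, almost-omniscient (delay-$1$) jammer to behave, as far as the decoder can tell, obliviously. The lever is precisely the one-bit delay: when the jammer commits to $z_i$ it has seen $x_1,\dots,x_{i-1}$ but neither $x_i$ nor any later bit, so if the encoder keeps injecting fresh private randomness as the codeword unfolds, the jammer is perpetually one step behind the randomness governing the bits it is currently attacking. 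Concretely, I would encode the message with a rate-$(1-p-\e)$ capacity-achieving BEC code and superimpose on it a private randomized transformation, rolled out over $T$ consecutive chunks of length $\ell=n/T$ with $1\ll\ell\ll n$, whose sole job is to ensure that whatever $pn$-erasure pattern the jammer produces looks, after the decoder undoes the transformation, like a uniformly random $pn$-subset --- for which the BEC code decodes. The decoder sees the erasure pattern exactly, reconstructs the transformation from the received word, and runs the BEC decoder.

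\textbf{Analysis and main obstacle.} The proof would then (a) use the one-bit delay to argue that, conditioned on the past, the jammer's erasure decisions inside chunk $j$ are (nearly) independent of the private randomness that scrambles chunk $j$; (b) establish a per-chunk large-deviation bound --- in the spirit of the quantities $\KLmin,\qmin,\qmax,\rmin,\rmax$ and the coherence parameters $\eta_1,\eta_2,\eta_3$ --- showing that, for whatever (blind) erasure fraction the jammer spends on a chunk, the event that the induced descrambled pattern is ``atypical'' there has probability $e^{-\Omega(\ell)}$; (c) combine the budget constraint $\sum_j(\text{erasures in chunk }j)\le pn$ with a union bound over the $T$ chunks so that, with probability $1-o(1)$ and uniformly over messages and strategies, the global descrambled pattern is typical and the BEC decoder succeeds, giving $\Perr\to0$; and (d) let $\e\downarrow0$. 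I expect the crux to be (a) and (b) together with the design constraint that makes them possible: the scrambling randomness must be \emph{recoverable} by Bob in the teeth of $pn$ adversarial erasures yet \emph{unpredictable} to the jammer at the moment it attacks each chunk, and --- since no binary side channel can robustly carry a designated string past $pn$ worst-case erasures --- this randomness must be woven into the codeword itself so as to be simultaneously erasure-robust, late-revealing, and essentially free in rate, while the large-deviation estimate must be sharp enough that every $\e$-loss closes to yield rate $1-p-\e$ for all $\e>0$.
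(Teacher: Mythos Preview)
Your converse is fine and matches the paper's one-line argument (the paper simply erases the first $pn$ bits; your random-subset BEC converse is an equivalent instantiation).

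The achievability sketch, however, has a genuine gap --- and you identify it yourself. Your plan hinges on a ``scrambling'' layer whose randomness is \emph{simultaneously} (i) recoverable by Bob from the erased word and (ii) hidden from Calvin when he attacks each chunk, but you give no mechanism for achieving this. The natural candidates fail: if the chunk-$j$ scrambler is encoded in earlier chunks, Calvin learns it before attacking chunk $j$ (one bit of delay is negligible against a chunk of length $\ell\gg1$); if it is encoded in later chunks, Bob must decode those first, and they were attacked by an adversary who by then knew everything. More fundamentally, ``make the descrambled erasure pattern look uniformly random'' is not a well-posed goal for a $1$-bit delay --- a single-bit head start gives you no leverage over a chunk-scale permutation. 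Your step (a) asserts near-independence of Calvin's chunk-$j$ decisions from the chunk-$j$ randomness, but for any scrambler that Bob can reconstruct, this independence claim is exactly what is unproved.

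The paper's construction is structurally different and does not attempt to randomize the erasure \emph{pattern} at all. It exploits the one-bit delay \emph{per bit}: the encoder transmits $\bu(m)\oplus\bZ$ where $\bZ$ is i.i.d.\ Bernoulli noise whose parameter $q_k\in\{q_1,\dots,q_K\}$ at position $i$ depends on which cell of a \emph{message-specific} random partition $\{S(m,k)\}$ contains $i$. Because Calvin commits to erasing bit $i$ before seeing it, every unerased bit carries a fresh sample of $\bZ$ untouched by the adversary. The decoder first list-decodes on a prefix containing $(R+\epsilon/2)n$ unerased bits (random codes are list-decodable against erasures plus $o(n)$ bit errors), obtaining a list of size $O(\log\log n)$. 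It then disambiguates on the $\Omega(\epsilon n)$ unerased suffix bits: for any two list members $m,m'$, their partitions are ``incoherent'' on a large set $V$ where $m$ prescribes noise level $q_k$ and $m'$ prescribes $q_{k'}\ne q_k$, and a maximum-likelihood test on $\bY_V$ separates $\mathrm{Bern}(q_k)$ from $\mathrm{Bern}(q_{k'})$ with error $\exp(-\Omega(|V|q_k))$. The private randomness Bob needs is not a scrambler to be recovered; it is the noise $\bZ$ itself, whose \emph{statistics} (not its realization) are what the decoder tests, and those statistics are pinned down by the codebook, not by anything Calvin can corrupt.
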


The next theorem contrasts the above result to say that if the transmitter
is restricted to using deterministic codes, then the capacity is at most $1-2p$.
%This is the same as the capacity for an adversary with exactly causal
%observation. 
Therefore stochastic encoding is crucial to take advantage of
the adversary's delayed observation.

\begin{theorem}
\label{th:deterministic}
The capacity of a binary channel under deterministic encoding,
with a bit-erasing adversary who can
erase upto $p$ fraction of a codeword based on
causal $1$-bit-delayed observation is at most $1-2p$.
\end{theorem}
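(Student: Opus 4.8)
The plan is to show that any deterministic code of rate $R>1-2p$ is defeated, with probability of error equal to $1$, by a jamming strategy that need not observe the transmitted codeword at all, and which is therefore in particular a legal strategy for a $1$-bit-delayed Calvin. The combinatorial input is the asymptotic Plotkin bound: a binary code of block length $n$ with minimum Hamming distance at least $d$ has at most $2^{\,n-2d+O(\log n)}$ codewords. Hence for every $\e>0$ there is an $n_0(\e)$ such that for all $n\ge n_0(\e)$, any code with at least $2^{\lfloor n(1-2p+\e)\rfloor}$ codewords must contain two distinct codewords $\bx^{(a)}$ and $\bx^{(b)}$ with $\dH(\bx^{(a)},\bx^{(b)})<pn$.

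First I would fix such a pair and set $D=\{i\in[n]:x^{(a)}_i\ne x^{(b)}_i\}$, which is nonempty with $|D|<pn$. Calvin uses the constant strategy that erases exactly the coordinates in $D$, i.e.\ $\gamma_t\equiv 1$ for $t\in D$ and $\gamma_t\equiv 0$ otherwise; this ignores its input, so it is a valid strategy of delay $\delay=1$ (indeed of any delay), and it satisfies the cost constraint because $\sum_{t=1}^{n}c(Z_t)=|D|\le pn$. Next I would observe that the channel output is the same deterministic vector whether the transmitted message is $a$ or $b$: in both cases $y_t=\er$ for $t\in D$, and $y_t=x^{(a)}_t=x^{(b)}_t$ for $t\notin D$, since the two codewords agree off $D$ and, the code being deterministic, there is no other randomness in the system. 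Therefore $\ddec(\by)$ is one fixed message, and for whichever of $a$ and $b$ it does not equal --- call it $m^\ast$ --- we get $\Perr(m^\ast,\Gamma)=1$, hence $\Perr=1$ for this code. Since this applies to every block length $n\ge n_0(\e)$, no sequence of deterministic codes of rate $1-2p+\e$ can have $\Perr\to 0$; letting $\e\downarrow 0$ shows the capacity is at most $1-2p$.

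I do not expect a genuinely hard step in this argument; the only point requiring care is the $o(1)$ slack in the Plotkin bound, which must be tracked so that a rate strictly above $1-2p$ really does force a codeword pair at Hamming distance strictly below $pn$ for all large $n$ (for $p\ge 1/2$ the claimed bound is vacuous, consistent with the capacity being $0$ there). The point worth stressing --- and the reason this is compatible with Theorem~\ref{th:stochastic} --- is that the attack only exploits a minimum-distance deficiency of the \emph{fixed} codebook and uses neither the current nor any past transmitted bit; a stochastic encoder presents no fixed codewords and hence no minimum distance for Calvin to attack, which is exactly what lets it trade the one-bit delay for the jump from $1-2p$ to $1-p$. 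One can also dress this up ``operationally'' --- Calvin reads the prefix until the set of consistent messages is pinned down and then steers the suffix toward a competing codeword --- but, by the Plotkin step, this reduces to the same non-adaptive core, so I would present the simple version.
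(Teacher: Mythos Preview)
Your argument is correct for the \emph{maximum} error criterion, which is how the paper formally defines $\Perr$ and hence capacity: the asymptotic Plotkin bound forces two codewords at Hamming distance below $pn$ once the rate exceeds $1-2p$, and erasing their symmetric difference is a legal oblivious (hence certainly $1$-bit-delayed) strategy that drives $\Perr$ to $1$ on one of the two messages.

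The paper, however, proves something strictly stronger than the theorem as literally stated: Section~4 opens by announcing a bound on the \emph{average} probability of error, and the entire wait-and-push construction is designed for that. Your non-adaptive attack defeats a single message, so the average error it induces is $O(2^{-nR})\to 0$; indeed, the paper itself recalls that for an \emph{oblivious} erasure adversary the deterministic average-error capacity is $1-p$, so no non-adaptive strategy can possibly give a $1-2p$ average-error converse. To get $\Omega(1)$ average error, Calvin must use his causal view: he waits until the consistency set $\renc_{\ell-1}(m)$ has shrunk into a controlled window (Lemma~6 is the delicate step showing this window is hit with constant probability over a uniformly random message), then picks a random competitor and erases both the remaining branch points and the disambiguating positions, with Plotkin and Tur\'an applied to the \emph{suffix} code to bound his erasure budget.

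So your closing remark is the one place I would push back: the ``operational'' version you describe is not a cosmetic dressing-up that reduces to the non-adaptive core---for average error it is the whole proof, and the adaptivity is essential. Your simple argument buys a clean two-line proof of the max-error statement; the paper's argument buys the average-error statement, which is the one that genuinely separates the $\Delta=1$ deterministic setting from the oblivious one.
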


\section{Analysis for Stochastic Encoding: Proof of Theorem~\ref{th:stochastic}}
\label{sec:lower}

We consider coding for an online adversarial channel with binary inputs in
which the adversary observes the channel input subject to unit delay and can
erase a fraction $p$ of the bits. Under a larger delay the capacity is $1 - p$
bits. However, for delay $0$ the capacity is $1 - 2p$ bits. 
%The goal here is to
%prove that for the special case $p = \frac{1}{2}$ that a positive rate is
%achievable.

\subsection{Code construction, encoding, and decoding}

Given a parameter $\capgap > 0$, rate $R = 1 - p - \capgap$ and blocklength $n$, let $M = \lfloor 2^{nR} \rfloor$ be the number of messages.

\subsubsection{Random code construction}
Our code construction relies on the following parameter settings:
	\begin{align}
	K &= (1/4) \log_2 n 
		\label{eq:numprobs} \\
	q_k &= 2^{k-1}n^{-1/2}, \qquad k \in [K] 
		\label{eq:fuzzprobs} 
	\end{align}
	
        \begin{enumerate}
        \item For each message $m \in [M]$ there is a \textit{base codeword} $\bu(m)$, selected uniformly at random from $\{0,1\}^n$.
        \item For each message $m \in [M]$ the encoder has a partition $\{
\chunk{m}{k} : k \in [K] \}$ of the set $[n]$ so that for each $k \in [K]$ the
set $\chunk{m}{k}$ is a set of indices of the codeword. %The partitions are
%generated independently for each $m$, by binning $[n]$ into $K$ bins uniformly
%at random.
We generate the partitions $\{\chunk{m}{k}\}_k$ for each $m$ by binning the indices in $[n]$ into $K$ bins independently and uniformly at random.
        \item In addition the
encoder maintains a set $\mc{Q} = \{ q_k : k \in [K] \}$ of probabilities, where $K$ and $q_k$ are given by \eqref{eq:numprobs} and \eqref{eq:fuzzprobs}.
        \end{enumerate}
	
\subsubsection{Encoding} 

The encoding is randomized.  To encode the message $m \in [M]$
the encoder transmits $\bX = \bu(m) \oplus \bZ$, where $\bZ = (Z_1, Z_2,
\ldots, Z_n)$ with $Z_i \sim \bern( q_k )$ if $i \in \chunk{m}{k}$. That is,
for each $k\in K$, the encoder adds a $\bern(q_k)$ noise to the components
indexed by $\chunk{m}{k}$. Our encoder is depicted in Figure~\ref{fig:encoder}.

\begin{figure}
\begin{center}
\includegraphics[width=0.8\textwidth]{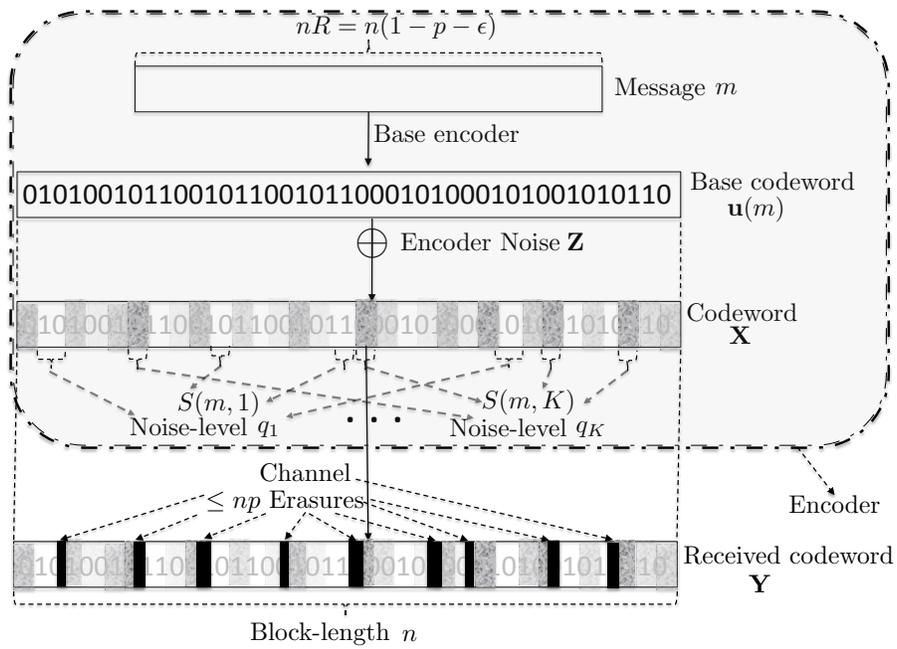}
\caption{Encoder: the message $m$ is encoded first to ${\bf u}(m)$ and then noise ${\bf Z}$ is added according to the subsets $\{S(m,k)\}$ and corresponding noise level probabilities $\{q_k\}$. The subsets are represented by different shades of grey. The resulting codeword X is corrupted by Calvin who may use at most $pn$ erasures. The received word is $Y$.}
\label{fig:encoder}
\end{center}
\end{figure}

\subsubsection{Decoding}

\begin{enumerate}
\item Given the received codeword $\bY$, the decoder first finds the smallest
index $\decT$ such that the first $\decT$ positions of $\bY$ contain
$(R+\decslack)n$ unerased bits: \begin{align}
        \decT = \min \{ t : |\{ i: i\leq t, \bY_i \neq \perp \}| \ge (R+\decslack)n \}.\label{eq:eta2}
        \end{align}

\item {\bf (List decoding):} The decoder then constructs a list $\mc{L}$ based on the prefix $\bY_1^{\decT}$. More specifically, message $m$ is put in the list if
        \begin{align}
        \left| \left\{ i \in [1:\decT] : u_i(m) \ne Y_i, Y_i \neq \er \right\} \right| < n^{3/4}. \label{eq:list-size}
        \end{align}
That is, all codewords which are sufficiently close in Hamming distance (on the unerased bits) are put in the list.

\item {\bf (List disambiguation):} The decoder then turns to the suffix $\bY_{[(\decT+1):n]}$. For a tuple $(m_1,m_2,k_1,k_2)$ define the set of unerased bits that are in the $k_1$-th part of $m_1$ and the $k_2$-th part of $m_2$:
        \begin{align}
        V_{m_1,m_2,k_1,k_2} = & \left\{ i \in [(\decT+1):n] \cap \chunk{m_1}{k_1} \cap \chunk{m_2}{k_2} : Y_i \in \{0,1\} \right\}.
        \end{align}
For each pair $(m_1,m_2) \in \mc{L} \times \mc{L}$, the decoder first checks to see if there exists a $(k_1,k_2)$ such that $k_1 \ne k_2$ and
        \begin{align}
        \left| V_{m_1,m_2,k_1,k_2} \right| \ge \frac{\capgap n}{4(K^2-K)}. \label{eq:decoder}
        \end{align}
\noindent If no such pair $(k_1,k_2)$, $k_1 \ne k_2$, exists then the decoder
declares a decoding error. If such a pair exists the decoder takes the first
such pair (lexicographically ordered) over all $K^2-K$ such pairs, which we
denote by $V_{m_1,m_2}$.

We adopt a simplified maximum likelihood decoding rule. Partition the set of indices into positions where $\bu(m_1)$ and $\bu(m_2)$ agree or disagree:
        \begin{align}
        V_0 &= \{i \in V_{m_1,m_2} : u_i(m_1) = u_i(m_2) \} \\
        V_1 &= \{i \in V_{m_1,m_2} : u_i(m_1) \ne u_i(m_2) \}.
        \end{align}
Set $V$ to be the larger of the two sets so that $|V| \ge | V_{m_1,m_2}
|/2$. We apply the maximum likely decoder to $V$. Let $\alpha( m ) = \dH( \bY_{V}, \bu_{V}(m) )$. We say $m_1$ beats $m_2$ if
        \begin{align}
        \frac{ q_{k_1}^{\alpha(m_1)} (1 - q_{k_1})^{|V| - \alpha(m_1)} 
                }{ 
                q_{k_2}^{\alpha(m_2)} (1 - q_{k_2})^{|V| - \alpha(m_2)}
                } 
                > 1,
          \label{eq:mlrule}
        \end{align}
otherwise we say $m_2$ beats $m_1$.
	
\item If there exists a message $\hat{m}$ in the list $\mc{L}$ that beats all other elements of the list (a ``Condorcet winner'') then output that message $\hat{m}$, else it declares an error.
\end{enumerate}

%\subsubsection{Random code generation}
\subsection{Analysis}

In the analysis we follow the usual recipe: we show that for sufficiently large
$n$, with high probability, a randomly constructed code 
will have $\Perr$ that vanishes as $n \to \infty$, thereby showing that
such a code exists.  Recall that in our code construction, we choose 
the `pure codewords' $\bu(m)$
independently and uniformly at random from $\{0,1\}^n$ (i.e.~they are
i.i.d.~$\bern(1/2))$; and we generate the partitions $\{\chunk{m}{k}\}_k$ 
for each $m$ by binning the indices in $[n]$ into $K$ bins uniformly at random.
The codebook consists of both the `pure codewords' as well as the
partitions for each message.

Let the random variable representing this codebook be denoted by $\cC$.  We will prove that the codebook has nice properties with a probability that is super-exponentially close to $1$.

\begin{figure}
\begin{center}
\includegraphics[width=0.8\textwidth]{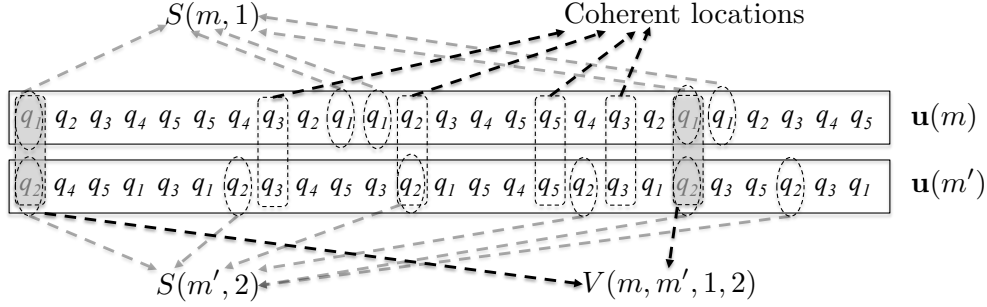}
\caption{An example demonstrating coherence over ${\cal T}$ for two base codewords $\bu(m)$ and $\bu(m')$. Let ${\cal T} = [25]$, {\it i.e.}, it comprises of the first $25$ locations of each codeword, and let there be $K=5$ noise-levels $q_1,\ldots,q_5$ for each codeword, in the sets of locations $S(m,1),\ldots,S(m,5)$ and $S(m',1),\ldots,S(m',5)$ respectively. The expected size of each $V(m,m',k,k')$ is therefore $|{\cal T}|/K^2 = 1$. It can be verified that the largest size of $V(m,m',k,k)$ is $2$ (only for $k = 3$ -- all other sets of size at least $2$ have $k\neq k'$), and hence $\bu(m)$ and $\bu(m')$ are at most $2$-coherent over ${\cal T}$. Further, there are exactly $4$ locations in which  $\bu(m)$ and $\bu(m')$ are coherent (the $8$th, $12$th, $16$th and $18$th locations, as highlighted in this figure), so the remaining $21$ decoherent locations are potentially usable by the decoder Bob, to disambiguate between $m$ and $m'$. In this example, $V(m,m',1,2)$ comprising of the two locations $\{1,20\}$ is a possible choice for the disambiguation set, being of ``reasonable size'', and being the lexicographically first set with $k \neq k'$.} 
\label{fig:coherence}
\end{center}
\end{figure}

Fix any $\capgap > 0$ and recall $R = 1-p-\capgap$. We prove a sequence of lemmas to prove we can achieve rate $R$.

\begin{lemma} \label{lem:zweight}
With probability at least $1-\exp \left (-\frac{\sqrt{n}}{2} \right )$ over encoder's random noise $\bZ$, the Hamming weight of $\bZ$ is at most $n^{3/4}$.
\label{lem:noise}
\end{lemma}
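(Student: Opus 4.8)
The plan is to bound the Hamming weight of $\bZ$ via a Chernoff-type concentration argument, being careful that $\bZ$ is not i.i.d.\ but is a direct sum of independent blocks with heterogeneous Bernoulli parameters. First I would observe that $\wt(\bZ) = \sum_{k=1}^{K} \sum_{i \in \chunk{m}{k}} Z_i$, where conditionally on the partition the inner sums are independent binomials $\mathrm{Bin}(|\chunk{m}{k}|, q_k)$. The expected weight is $\E[\wt(\bZ)] = \sum_k |\chunk{m}{k}| q_k \le \sum_k n q_k = n \sum_{k=1}^{K} 2^{k-1} n^{-1/2} = n^{1/2}(2^K - 1)$. Since $K = (1/4)\log_2 n$ we have $2^K = n^{1/4}$, so the mean is at most $n^{1/2} \cdot n^{1/4} = n^{3/4}$, matching the target threshold. (In fact the mean is slightly below $n^{3/4}$, which leaves a little slack; if one wants a strict inequality one can instead compare against, say, $2 n^{3/4}$ or absorb the $-1$, but morally the mean already sits at the threshold.)

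The next step is concentration. Since $\bZ \in \{0,1\}^n$ has independent coordinates (the $Z_i$ are mutually independent, each coordinate lying in exactly one part, with parameter determined by that part), $\wt(\bZ)$ is a sum of $n$ independent $[0,1]$-valued random variables with mean $\mu \le n^{3/4}$, and the standard multiplicative Chernoff bound gives $\P[\wt(\bZ) \ge 2\mu] \le \exp(-\mu/3)$, hence $\P[\wt(\bZ) \ge n^{3/4}] \le \exp(-\Omega(n^{3/4}))$ once we account for the tiny gap, or more cleanly one uses the additive Hoeffding/Bernstein form $\P[\wt(\bZ) - \mu \ge t] \le \exp(-t^2/(2\mu + 2t/3))$ with an appropriate $t = \Theta(n^{3/4})$. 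Any of these is comfortably stronger than the claimed $\exp(-\sqrt{n}/2)$ bound; in fact $\exp(-\Omega(n^{3/4})) \le \exp(-\sqrt{n}/2)$ for large $n$, so the weaker stated bound follows a fortiori. Alternatively, and perhaps matching the authors' intended constants most directly: condition on the partition, note $\wt(\bZ)$ is stochastically dominated by $\sum_{k} \mathrm{Bin}(n, q_k)$, and apply the Chernoff bound to the dominant levels (the top few $k$'s contribute the bulk since the $q_k$ double).

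The main obstacle, such as it is, is not deep: it is simply keeping track of the non-i.i.d.\ block structure so that independence of the $Z_i$ is used correctly, and choosing the slack in the threshold so that the mean ($\approx n^{3/4}$, or exactly $n^{1/2}(n^{1/4}-1)$) is strictly below $n^{3/4}$ by a $\Theta(n^{1/2})$ margin — which it is, since $n^{3/4} - n^{1/2}(n^{1/4}-1) = n^{1/2}$. That $\Theta(n^{1/2})$ gap is exactly what feeds into a deviation exponent of order $n^{1/2}$ (via $t^2/\mu$ with $t = \Theta(n^{1/2})$, $\mu = \Theta(n^{3/4})$ gives exponent $\Theta(n^{1/4})$, which again dominates $\sqrt{n}/2$ — so one should instead take $t$ a constant fraction of $\mu$ to get the full $\exp(-\sqrt n/2)$; either way the stated bound is met). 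No step requires more than a one-line Chernoff estimate, so I would present the mean computation, invoke the standard tail bound, and conclude.
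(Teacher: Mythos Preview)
Your overall strategy---independence of the $Z_i$ followed by a Chernoff/Hoeffding tail bound---is exactly the paper's approach, but your arithmetic on the mean is loose in a way that actually breaks the stated exponent. You bound $\E[\wt(\bZ)]=\sum_k |\chunk{m}{k}|\,q_k$ by $\sum_k n\,q_k = n^{1/2}(2^K-1)=n^{3/4}-n^{1/2}$, which leaves only a gap of order $n^{1/2}$ to the threshold $n^{3/4}$. With that gap, multiplicative Chernoff ($\delta\approx n^{-1/4}$) and Bernstein ($t^2/\mu\approx n/ n^{3/4}$) both give an exponent of order $n^{1/4}$, not $n^{1/2}$; your parenthetical ``which again dominates $\sqrt{n}/2$'' has the inequality the wrong way around, since $\exp(-c\,n^{1/4})$ is \emph{weaker} than $\exp(-\sqrt{n}/2)$.

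The one-line fix---and this is precisely what the paper does---is to bound in the other direction: use $q_k\le q_K$ rather than $|\chunk{m}{k}|\le n$. Then each $Z_j$ is stochastically dominated by an i.i.d.\ $\bern(q_K)$ variable, the dominating sum has mean $n q_K = n^{3/4}/2$, and the deviation to the threshold is $n^{3/4}/2$, a constant fraction of the mean. Additive Hoeffding now gives
\[
\P\!\left(\sum_j Z_j > n^{3/4}\right)\;\le\;\exp\!\left(-\frac{2\,(n^{3/4}/2)^2}{n}\right)\;=\;\exp\!\left(-\frac{\sqrt{n}}{2}\right),
\]
matching the lemma exactly.
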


\begin{proof}
Since each $\P(Z_j = 1) \le q_K$ for all $j$, the probability is upper bounded by the probability that $n$ i.i.d. variables $A_j \sim \bern(q_K)$ have Hamming weight greater than $n^{3/4}$. The expected weight of $\mbf{A}$ is $q_Kn=(2^{(\log(n)/4)-1}n^{-1/2})n = \frac{n^{3/4}}{2}$. Therefore by Hoeffding's inequality,
	\begin{align}
	\P\left( \sum_{j} Z_j > n^{3/4} \right) 
		&\le \exp \left (-\frac{\sqrt{n}}{2} \right ).
	\end{align}
\end{proof}

\begin{lemma}
The length $\decT$ given in \eqref{eq:eta2} of the prefix $\bY_1^\decT$ is at most $(1-\capgap/2)n$ and the suffix $\bY_{\decT+1}^n$ has at least $n\capgap/2$ unerased bits.
\label{lem:prefix}
\end{lemma}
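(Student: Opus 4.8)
The plan is to prove both claims by an elementary deterministic counting argument on the erasure pattern, using only that Calvin erases at most $pn$ bits and the minimality built into the definition \eqref{eq:eta2} of $\decT$; no probabilistic reasoning is needed here, since the statement must hold for every realization of the code, the encoder noise, and every admissible erasure pattern.

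For the upper bound on $\decT$, I would look at the first $\lceil(1-\capgap/2)n\rceil$ positions of the transmitted word. Calvin's total erasure budget is $pn$, so at most $pn$ of these positions are erased, and hence at least $(1-\capgap/2)n - pn = (1-p-\capgap/2)n = (R+\decslack)n$ of them carry unerased bits (recall $R = 1-p-\capgap$ and $\decslack = \capgap/2$). By the definition of $\decT$ as the smallest prefix length achieving $(R+\decslack)n$ unerased bits, this already forces $\decT \le \lceil(1-\capgap/2)n\rceil$, which is $(1-\capgap/2)n$ up to the $O(1)$ rounding we suppress throughout.

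For the suffix, note that the total number of unerased bits in $\bY$ is at least $n - pn = (1-p)n$. On the other hand, minimality of $\decT$ says the prefix $\bY_1^{\decT-1}$ contains strictly fewer than $(R+\decslack)n$ unerased bits, so $\bY_1^{\decT}$ contains at most $(R+\decslack)n$ unerased bits (adding one position adds at most one unerased bit). Subtracting, the suffix $\bY_{\decT+1}^n$ contains at least $(1-p)n - (R+\decslack)n = (1-p)n - (1-p-\capgap/2)n = (\capgap/2)n$ unerased bits, which is the claim. There is no genuine obstacle in this lemma; the only thing to watch is the bookkeeping of the $\lfloor\cdot\rfloor$ and $\lceil\cdot\rceil$ terms coming from non-integrality of quantities like $Rn$, which contribute only lower-order corrections and do not affect the stated bounds.
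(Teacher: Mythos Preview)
Your proof is correct and follows essentially the same counting argument as the paper: the first claim is derived exactly as you do, and for the suffix the paper sets $\lambda = (R+\capgap/2)n$ as the number of unerased bits in $\bY_1^\decT$ and computes $(n-\decT)-(pn-\decT+\lambda) = n - pn - \lambda = \capgap n/2$, which is just your subtraction of prefix unerased bits from total unerased bits written via the complementary count of erased bits. If anything, your use of minimality to justify $\lambda \le (R+\capgap/2)n$ is slightly more careful than the paper's direct assertion of equality.
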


\begin{proof}
Since $R = 1-p-\capgap$ and the adversary can erase at most $pn$ locations, the number of unerased bits in $\bY_1^{(1-\capgap/2)n}$ is at least $(1-p-\capgap/2)n = n(R+\capgap/2)$, as required by the definition of $\decT$ in \eqref{eq:eta2}. 
Let $\lambda$ be the number of unerased bits in $\bY_1^\decT$.
By our definitions it holds that $\lambda = (1-p-\capgap/2)n$.
Thus the number of erased bits in $\bY_1^\decT$ is $\decT  - \lambda$.
Implying at most $pn-\decT  + \lambda$ erased bits in $\bY_{\decT+1}^n$, which finally implies at least
	\begin{align}
	(n-\decT) - (pn - \decT  +\lambda) = n-pn-(1-p-\capgap/2)n = \capgap n/2
	\end{align} 
unerased bits in $\bY_{\decT+1}^n$.
\end{proof}

%\begin{proof}
%Fix the message $m$ and erasure pattern $\be$. Given $\be$ the time $\decT$ is fixed. Let $\bar{E} = \{ i \in [1:\decT] : Y_i \in \{0,1\} \}$ be the unerased positions of the received codeword. Consider the Hamming distance
%	\begin{align}
%	d_H(\bU_{\bar{E}}, \bY_{\bar{E}}) = \sum_{j \in \bar{E}} Z_j,
%	\end{align}
%which is the sum of $n (R + \decslack)$ independent encoder noise variables. We can upper bound the probability that the distance exceeds a threshold using Hoeffding's inequality. The worse case is that the $Z_j \sim \bern(q_K)$, so
%	\begin{align}
%	\P\left( \sum_{j \in \bar{E}} Z_j > n (R + \decslack) (q_K + \eta) \right) 
%		&\le \exp( - 2 \eta^2 n (R + \decslack) n ).
%	\end{align}
%\todo{set the slack variable $\decslack$.}
	
%Now consider the event 
%	\begin{align}
%	A(m,\be) = \mbf{1}\left( m \ne \mc{L} \right).
%	\end{align}
%
%\end{proof}
%
%\ads{in light of above, perhaps change the decoding rule?}

%\begin{lemma}
%The list has constant list size.
%\end{lemma}

We now define a few useful properties of our random code. The decoder will have difficulty resolving the difference between codewords if they share very similar partitions $\{S(m,k)\}$.  For two messages $(m,m')$, set $\cT \subseteq [n]$, the expected number of common locations over $\cC$ is
	\begin{align}
	\E_{\cC}\left[ \sum_{k=1}^{K} |\chunk{m}{k}\cap \chunk{m'}{k} \cap \cT| \right] 
	= \sum_{k=1}^K \frac{|\cT|}{K^2} 
	= \frac{|\cT|}{K}.
	\end{align}
For a $\cT \subset [n]$, call a pair of base codewords $(\bu(m), \bu(m'))$ \textit{$\setcoh$-coherent over $\cT$} if 
	\begin{align}
	\sum_{k=1}^K |\chunk{m}{k}\cap \chunk{m'}{k} \cap \cT| \le \frac{|\cT|}{K}(1+\setcoh)
	\end{align}
That is, the number of locations in $\cT$ in which both $\bu(m)$ and $\bu(m')$ have the same noise levels is at most an $(1+\setcoh)$ factor greater than the expected number of such locations.
We call a codebook $\cC$ is \textit{$(\setcoh,\codecoh)$-coherent} if for each pair of messages $(m,m')$ and each $\cT$ of size at least $\codecoh n$, the pair of base codewords $(\bu(m), \bu(m'))$ are $\setcoh$-coherent over $\cT$.

Let $\cTp$ be an ordered subset of $[n]$, and denote its $i$th entry by $(\cTp)_i$.
Define the {\it restriction of a base codeword $\bu(m)$ to $\cTp$}, $\bu_\cTp(m)$, as the length-$|\cTp|$ binary vector whose $i$-th entry equals the $(\cTp)_i$th entry of $\bu(m)$. Define the {\it restriction of a codebook $\cC$ to $\cTp$}, denoted $\cC_{\cTp}$, is analogously defined as the codebook (with possible repetitions) generated by restricting each base codeword $\bu(m) \in \cC$ to $\cTp$.
We call codebook $\cC$ {\it $(w_\uner,w_e,s)$-list-decodable} if for each set $\cTp \subset [n]$ of size at least $w_\uner$ (of unerased bits), the restricted codebook $\cC_{\cTp}$ is ``decodable against weight $w_e$ errors to a list of size at most $s$''. More precisely, for the (unrestricted) codebook $\cC$, for any set $\cTp \subset [n]$ of size at most $w_\uner$, any Hamming ball in $\{0,1\}^{w_\uner}$ of radius at most $w_e$ contains less than $s$ codewords restricted to $\cTp$.

\begin{lemma}
\label{lem:props}
For any sufficiently small $\capgap >0$, there exists sufficiently large $N_\capgap$, such that for all $n>N_\capgap$, with probability at least 
	\begin{align}
	1-2^{-\frac{\capgap^2 }{4}n\log\log(n)}
	\end{align}
over the design of codebook $\cC$, the following two properties hold: 
\begin{enumerate}
\item \label{lem:coh} The codebook $\cC$ is at most $(K/2-1,\capgap/2)$-coherent.
\item \label{lem:list} The codebook $\cC$ is $(n(1-p-\capgap/2),n^{3/4},(\log\log(n))\capgap/2)$-list-decodable.
\end{enumerate}
\end{lemma}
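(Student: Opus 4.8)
The plan is to prove the two properties separately and then union-bound, showing that each fails with probability at most roughly $2^{-\frac{\capgap^2}{4}n\log\log n}$ (up to constants that can be absorbed for $n$ large). For Property~\ref{lem:coh} (coherence), fix a pair of messages $(m,m')$ and a set $\cT$ with $|\cT| \ge (\capgap/2)n$. The random variable $\sum_{k=1}^K |\chunk{m}{k}\cap\chunk{m'}{k}\cap\cT|$ counts the indices $i\in\cT$ that land in the same bin under both partitions; since the two partitions are independent and each bin assignment is uniform on $[K]$, each such $i$ contributes an independent $\bern(1/K)$ indicator, so the count is a sum of $|\cT|$ i.i.d.\ Bernoulli$(1/K)$ variables with mean $|\cT|/K$. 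I would then apply a multiplicative Chernoff bound: the probability that this count exceeds $(|\cT|/K)(1 + \setcoh)$ with $\setcoh = K/2 - 1$ (so $1+\setcoh = K/2$) is at most $\exp(-c\,|\cT|\,(K/2)/K) = \exp(-c'|\cT|)$ for an absolute constant, which since $|\cT|\ge(\capgap/2)n$ and $K=\tfrac14\log_2 n$ is in fact much smaller than $2^{-\frac{\capgap^2}{4}n\log\log n}$ --- note the deviation is by a constant \emph{factor} times the mean, and the mean grows like $n/K$, so the exponent is $\Theta(n/K) = \Theta(n/\log n)$; one must check this still dominates after the union bound over pairs $(m,m')$ and over all $\cT$. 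There are at most $M^2 \le 2^{2nR}$ pairs and at most $2^n$ subsets $\cT$, so the union bound costs a factor $2^{O(n)}$ in the exponent. \textbf{This is the main obstacle}: a naive Chernoff bound giving $\exp(-\Theta(n/\log n))$ failure probability per instance is \emph{not} strong enough to beat a $2^{O(n)}$ union bound. The resolution is to restrict attention to \emph{minimal} sets $\cT$ of size exactly $\lceil(\capgap/2)n\rceil$ (coherence over a minimal bad set implies it for no subset, and for supersets coherence is inherited in the appropriate direction after rescaling --- one should state the monotonicity carefully, since the bound is on a ratio to $|\cT|$), and more importantly to use the fact that a deviation by a \emph{constant factor} $K/2$ above a mean of order $n/K$ is a large-deviations event of probability $2^{-\Theta(n\log K)} = 2^{-\Theta(n\log\log n)}$: the correct Chernoff estimate for $\P[\mathrm{Bin}(N,1/K) \ge tN/K]$ with $t$ of order $K$ is $2^{-\Theta(N \log t)} = 2^{-\Theta(N\log K)}$, not $2^{-\Theta(N/K)}$. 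With $N = \Theta(\capgap n)$ this yields $2^{-\Theta(\capgap n \log\log n)}$, which absorbs the $2^{O(n)}$ union-bound factor for $n$ large (since $\log\log n \to \infty$) and matches the claimed bound.

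For Property~\ref{lem:list} (list-decodability), I would use a standard random-coding / Plotkin-type counting argument. Fix $\cTp\subseteq[n]$ with $|\cTp| = w_\uner = n(1-p-\capgap/2)$ (again it suffices to treat sets of exactly this size, since a larger unerased set only helps), fix a center $\bc\in\{0,1\}^{w_\uner}$, and fix a candidate list $m_1,\dots,m_s$ of $s = (\log\log n)\capgap/2$ distinct messages; we must bound the probability that all of $\bu_\cTp(m_1),\dots,\bu_\cTp(m_s)$ lie within Hamming distance $w_e = n^{3/4}$ of $\bc$. Because the base codewords are i.i.d.\ uniform on $\{0,1\}^n$, the restrictions $\bu_\cTp(m_j)$ are i.i.d.\ uniform on $\{0,1\}^{w_\uner}$, so this probability is $\big(\mathrm{Vol}(w_\uner, w_e)/2^{w_\uner}\big)^{s}$ where $\mathrm{Vol}$ is the Hamming-ball volume. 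Since $w_e = n^{3/4} = o(w_\uner)$, we have $\mathrm{Vol}(w_\uner,w_e)/2^{w_\uner} \le 2^{-w_\uner(1 - H(w_e/w_\uner))} \le 2^{-w_\uner(1-o(1))}$, i.e.\ each factor is at most $2^{-\Omega(n)}$. Hence the per-instance probability is at most $2^{-\Omega(ns)} = 2^{-\Omega(n\log\log n)}$ --- here is where the list size being $\Theta(\log\log n)$ rather than constant is what produces the $\log\log n$ in the exponent, matching the coherence bound.

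Finally I would union-bound Property~\ref{lem:list} over the choices: at most $\binom{n}{w_\uner} \le 2^n$ sets $\cTp$, at most $2^{w_\uner} \le 2^n$ centers $\bc$, and at most $M^s \le 2^{nRs}$ ordered $s$-tuples of messages. The dominant union-bound factor is $2^{nRs} = 2^{n R (\log\log n)\capgap/2}$, and we need the per-instance exponent $\Omega(ns) = \Omega(n(\log\log n)\capgap)$ to beat it; since the constant hidden in $\Omega$ (coming from $1 - H(n^{-1/4}) \to 1$) exceeds $R\le 1$ for $n$ large, the net failure probability is at most $2^{-c\capgap n\log\log n}$ for some $c>0$, and choosing $\capgap$ small and $n > N_\capgap$ lets us make the constant match $\capgap^2/4$ as stated. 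Combining the two failure probabilities by a final union bound gives the claimed $1 - 2^{-\frac{\capgap^2}{4}n\log\log n}$, completing the proof. The only genuinely delicate point, as noted, is getting the $\log\log n$ (equivalently $\log K$) factor into the coherence exponent --- it comes from a constant-multiplicative-factor large deviation of a $\mathrm{Bin}(\Theta(n), 1/K)$ variable, and must not be mistakenly estimated as $\Theta(n/K)$.
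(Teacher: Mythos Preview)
Your proposal is correct and follows essentially the same route as the paper's proof. A few remarks on the correspondence:

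For Property~\ref{lem:coh}, the paper does exactly what you arrive at after your self-correction: it bounds $\P[\mathrm{Bin}(|\cT|,1/K) \ge |\cT|/2]$ by the crude estimate $n\,2^n (1/K)^{|\cT|/2}$, which gives an exponent of order $|\cT|\log K \ge (\capgap/4)n\log\log n$, and then union-bounds over all pairs $(m,m')$ and all subsets $\cT$ (a factor $2^{O(n)}$). Your observation that the correct large-deviation exponent is $\Theta(N\log K)$ rather than $\Theta(N/K)$ is precisely the content of the paper's $(1/K)^{|\cT|/2}$ factor. Your digression about restricting to ``minimal'' $\cT$ is unnecessary: the paper simply union-bounds over all $2^n$ subsets, and the $2^{-\Theta(\capgap n\log\log n)}$ per-instance bound absorbs this, as you yourself note.

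For Property~\ref{lem:list}, the paper's argument is a minor rephrasing of yours: instead of fixing an $s$-tuple of messages and union-bounding over tuples, it bounds the binomial tail $\P[\text{at least $s$ of the $M$ codewords land in the ball}]$ directly via $\sum_{i\ge s}\binom{M}{i}\nu^i \le \sum_{i\ge s}(M\nu)^i$, which is algebraically the same as your union bound over ordered tuples. The key gap $R - (1-p-2\capgap/3) = -\capgap/3$ per factor, multiplied by $s = (\capgap/2)\log\log n$, produces the $\capgap^2 n\log\log n$ exponent, exactly as in your computation.
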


\begin{proof}
We first prove that with high probability $\cC$ is at most $(K/2-1,\capgap/2)$-coherent. Since $\setcoh = (K/2-1)$, we must show that for any pair of base codewords $(\bu(m), \bu(m'))$
	\begin{align}
	\sum_{k=1}^K |\chunk{m}{k}\cap \chunk{m'}{k} \cap \cT| \le \frac{|\cT|}{2}.
	\end{align}
Recall that the sets $\{\chunk{m}{k}\}_{k=1}^K$ partition $[n]$.  We will calculate the probability over the randomly selected partitions $\{\chunk{m}{k}\}$.

Fix any set $\cT \subset [n]$ with size at least $\capgap n/2$. The probability that the code construction generates $\frac{|\cT|}{2}$ or more positions in which $m$ and $m'$ select the same $q_k$ can be written as follows:
\begin{align}
\sum_{i=\frac{|\cT|}{2}}^n \binom{n}{i} \left(\frac{1}{K} \right )^i \left(1-\frac{1}{K} \right )^{n-i}  
&< \sum_{i=\frac{|\cT|}{2}}^n \binom{n}{i} \left(\frac{1}{K} \right )^i \\ 
&<  \sum_{i=\frac{|\cT|}{2}}^n 2^n \left(\frac{1}{K} \right )^i\\ 
&<  n2^n \left(\frac{1}{K} \right )^{\frac{|\cT|}{2}}\\ 
&< n2^n \left(\frac{4}{\log(n)} \right )^{\frac{\capgap n}{4}}\\
&= 2^{-\frac{\capgap }{4}n\log\log(n)+ (1+\capgap/2)n+\log(n)},
\end{align}
where the last inequality follows from the setting of $K$ as $\log(n)/4$ and the size of $\cT$ being at least $n \capgap/2$. Taking a union bound over all pairs of base codewords (there are strictly less than $2^{2n}$ such pairs, since the rate of the code is less than $1$) and all possible sets $\cT$ (there are strictly less than $2^n$ such sets) shows that the probability that a code is {\it not} at most $(K/2-1,\capgap/2)$-coherent is at most $2^{-\frac{\capgap }{4}n\log\log(n)+ (4+\capgap/2)n+\log(n)}$.

We now prove that with high probability $\cC$ is appropriately list-decodable. This is broadly similar to classical derivations of list-decoding bounds, but due to the specific combination of error/erasure decoding required in this proof (with asymptotically vanishing fraction of errors but constant fraction of erasures) we re-derive a proof here. Since each base codeword $\bu(m)$ in the codebook $\cC$ is generated uniformly at random from $\{0,1\}^n$, the same is true for codewords restricted to $\cTp$ (for any $\cTp$). Therefore for all sufficiently large $n$, the probability that a codeword in $\cC_\cTp$ falls in any fixed Hamming ball of radius $n^{3/4}$ in  $\{0,1\}^{n(1-p-\capgap/2)}$ is  
	\begin{align}
	\frac{ \binom{n(1-p-\capgap/2)}{n^{3/4}} }{ 2^{n(1-p-\capgap/2)} }
	=
	2^{-n(1-p-\capgap/2)  + n^{3/4}\log(n^{1/4}) + \bigO(n^{1/4})} < 2^{-n(1-p-2\capgap/3)}
	\end{align}
where the equality follows from Stirling's approximation.

Let $\nu = 2^{-n(1-p-2\capgap/3)}$. The probability (over the design of codebook $\cC$) then that the Hamming ball contains at least $(\log\log(n))\cohbound$ codewords restricted to $\cTp$ is at most\footnote{Note that the expected number of codewords in the Hamming ball is no more than $q2^{nR}$, which equals $2^{-n\capgap/3}$.},
	\begin{align}
	\sum_{i=(\log\log(n))\capgap/2}^{{2^{nR}}} \binom{2^{nR}}{i} q^i (1-q )^{2^{nR}-i}  
	&<  \sum_{i=(\log\log(n))\capgap/2}^{{2^{nR}}} \binom{2^{nR}}{i} \nu^i \\ 
	& <  \sum_{i=(\log\log(n))\capgap/2}^{{2^{nR}}} 2^{nRi} \nu^i \\ 
	&=  \sum_{i=(\log\log(n))\capgap/2}^{{2^{nR}}} 2^{n(1-p-\capgap)i} \left (2^{-n (1-p-2\capgap/3)}	\right )^i \\ 
	&< 2^{nR} 2^{-\frac{\capgap^2}{3}n\log\log(n)} \\
	&< 2^{-\frac{\capgap^2}{3}n\log\log(n) + n}.
	\end{align}
Taking a union bound over all $2^{n(1-p-\capgap/2)}<2^n$ Hamming balls and all $\binom{n}{n(1-p-\capgap/2)}<2^n$ sets $\cTp$ implies that the probability (over design of $\cC$) that there exists a set $\cTp$ for which there exists a Hamming ball with at least $(\log\log(n))\capgap/2$ codewords restricted to $\cTp$ is at most $2^{-\frac{\capgap^2}{3}n\log\log(n) + 3n}$.

Hence the probability that at least one of the two properties (approximate decoherence, and list-decodability) required do not hold for the codebook $\cC$ is at most $2^{-\frac{\capgap }{4}n\log\log(n)+ (4+\capgap/2)n+\log(n)}+2^{-\frac{\capgap^2}{3}n\log\log(n) + 3n}$, which is at most $2^{-\frac{\capgap^2 }{4}n\log\log(n)}$ for all sufficiently small $\capgap$ and sufficiently large $n$.
\end{proof}

\begin{lemma}\label{lem:code_prop}
With probability at least 
	\begin{align}
	1-2^{-\frac{\capgap^2 }{4}n\log\log(n)} 
	\label{eq:prob:code_prop}
	\end{align}
over the design of codebook $\cC$, 
for any adversarial erasure pattern $\be$,
	\begin{enumerate}
	\item For every message $m$, the size of list $\mc{L}$ in Equation \eqref{eq:list-size} is at most $\log\log(n)\capgap/2$ if $\wt(\bZ) \le n^{3/4}$, and
	\item For every pair of messages $(m_1,m_2)$, there exists a pair $(k_1,k_2)$, with $k_1\neq k_2$, satisfying Equation (\ref{eq:decoder}).
	\end{enumerate}
\end{lemma}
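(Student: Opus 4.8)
The plan is to deduce both assertions, for \emph{every} erasure pattern $\be$ with at most $pn$ erasures, as deterministic consequences of the single high-probability event furnished by Lemma~\ref{lem:props}. So first I would condition on the event --- which by Lemma~\ref{lem:props} has probability at least $1-2^{-\frac{\capgap^2 }{4}n\log\log(n)}$, matching \eqref{eq:prob:code_prop} --- that the random codebook $\cC$ is simultaneously $(K/2-1,\capgap/2)$-coherent and $(n(1-p-\capgap/2),\,n^{3/4},\,(\log\log n)\capgap/2)$-list-decodable. The point is that each of these is a property of $\cC$ alone, quantified uniformly over all index subsets of the relevant size; and by Lemma~\ref{lem:prefix} the unerased prefix and the unerased suffix induced by any admissible $\be$ always lie in the size regime those two properties cover. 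Hence no further randomness is consumed, and it suffices to verify both claims hold deterministically on this event, for an arbitrary fixed $\be$.

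\textbf{List size.} Fix $\be$, let $\decT$ be as in \eqref{eq:eta2}, and put $\cTp=\{i\le\decT:\bY_i\neq\er\}$. By the minimality of $\decT$ we have $|\cTp|=n(1-p-\capgap/2)$, which is precisely the first parameter of the list-decodability property. Reading \eqref{eq:list-size}, a message $m'$ belongs to $\mc{L}$ exactly when $\dH(\bu_\cTp(m'),\bY_\cTp)<n^{3/4}$, i.e.\ when $\bu_\cTp(m')$ lies in the Hamming ball of radius at most $n^{3/4}$ about the fixed word $\bY_\cTp\in\{0,1\}^{|\cTp|}$; list-decodability then caps the number of restricted codewords in that ball by $(\log\log n)\capgap/2$, so $|\mc{L}|\le(\log\log n)\capgap/2$. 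The hypothesis $\wt(\bZ)\le n^{3/4}$ is used only to additionally place the transmitted message $m$ on the list: on unerased positions $Y_i=u_i(m)\oplus Z_i$, so $\dH(\bu_\cTp(m),\bY_\cTp)=\wt(\bZ_\cTp)\le\wt(\bZ)\le n^{3/4}$.

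\textbf{Disambiguating pair $(k_1,k_2)$.} Fix $\be$ and a pair $(m_1,m_2)$, and now put $\cTp=\{i>\decT:\bY_i\neq\er\}$, so $|\cTp|\ge\capgap n/2$ by Lemma~\ref{lem:prefix}. Since $\{\chunk{m_1}{k_1}\cap\chunk{m_2}{k_2}\}_{k_1,k_2}$ partitions $[n]$, the sets $V_{m_1,m_2,k_1,k_2}$ partition $\cTp$, whence $\sum_{k_1,k_2}|V_{m_1,m_2,k_1,k_2}|=|\cTp|$. Applying coherence (with $\setcoh=K/2-1$) to the pair $(m_1,m_2)$ and the set $\cTp$ bounds the diagonal contribution, $\sum_{k}|V_{m_1,m_2,k,k}|=\sum_{k}|\chunk{m_1}{k}\cap\chunk{m_2}{k}\cap\cTp|\le|\cTp|/2$, so the off-diagonal mass satisfies $\sum_{k_1\neq k_2}|V_{m_1,m_2,k_1,k_2}|\ge|\cTp|/2\ge\capgap n/4$. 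Averaging over the $K^2-K$ off-diagonal pairs yields some $(k_1,k_2)$ with $k_1\neq k_2$ and $|V_{m_1,m_2,k_1,k_2}|\ge\frac{\capgap n}{4(K^2-K)}$, which is exactly \eqref{eq:decoder}.

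I do not anticipate a real obstacle here --- the substance is already in Lemma~\ref{lem:props}, and what remains is bookkeeping to align its parameters with the quantities in the decoder. The places to be careful are: that the unerased prefix has size \emph{exactly} $n(1-p-\capgap/2)$, so that it matches the list-decodability parameter (this is where the minimality of $\decT$ in \eqref{eq:eta2} is essential); consistent handling of the strict/weak inequalities and of the integer roundings concealed in $n^{3/4}$ and $(\log\log n)\capgap/2$; and being explicit that the two codebook properties, being independent of $\be$, upgrade to ``for all $\be$'' once Lemma~\ref{lem:prefix} certifies that the induced prefix/suffix sets always have the required size.
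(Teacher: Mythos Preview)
Your proposal is correct and follows essentially the same approach as the paper: condition on the high-probability event of Lemma~\ref{lem:props}, then use Lemma~\ref{lem:prefix} to guarantee that the unerased prefix (resp.\ suffix) has the size needed to invoke list-decodability (resp.\ coherence), and finish part~2 by the same diagonal/off-diagonal pigeonhole over the $K^2-K$ pairs. If anything, you are slightly more explicit than the paper in noting that $|\cTp|$ equals $n(1-p-\capgap/2)$ exactly by minimality of $\decT$, and that the hypothesis $\wt(\bZ)\le n^{3/4}$ serves to place the transmitted message $m$ on the list.
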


%\begin{lemma}\label{lem:code_prop}
%For any adversarial erasure pattern $\be$, with probability at least 
%	\begin{align}
%	\left (1 - \exp \left (-\frac{\sqrt{n}}{2} \right )\right )\left (1-2^{-\frac{\capgap^2 }{4}n\log\log(n)}\right )
%	\label{eq:prob:code_prop}
%	\end{align}
%over design of codebook $\cC$, and encoder noise $\bZ$,
%	\begin{enumerate}
%	\item The list-size $\mc{L}$ in Equation \eqref{eq:list-size} is at most $(\log\log(n))\capgap/2$, 	and
%	\item There exists a pair $(k_1,k_2)$, with $k_1\neq k_2$, satisfying Equation (\ref{eq:decoder}).
%	\end{enumerate}
%\end{lemma}
%
\begin{proof}
\begin{enumerate}
\item 
%By Lemma~\ref{lem:noise}, with probability at least $\left (1-\exp \left (-\frac{\sqrt{n}}{2} \right )\right )$ the Hamming weight of the length-$\bZ$ is at most $n^{3/4}$, and hence the same probability provides an outer bound on the probability that the restriction of $\bZ$ to the unerased bits in the prefix $Y_1^\decT$ is of Hamming weight at most $n^{3/4}$. 
By Lemma~\ref{lem:prefix}, the prefix-length $\decT$ is at most $(1-\capgap/2)n$. Hence using part \ref{lem:list} of Lemma~\ref{lem:props} gives us the required bound on list-decodability.\footnote{In fact Lemma~\ref{lem:props}.\ref{lem:list} provides stronger guarantees than are required in this proof. For one, it shows list-decodability for {\it any} $\cT$ of appropriate size, whereas the decoder only ever decodes using $Y_1^\decT$. Furthermore, part \ref{lem:list} of Lemma~\ref{lem:props} guarantees that \textit{any} Hamming ball of appropriate radius does not correspond to too many messages, rather than just those Hamming balls centred at sub-vectors of $Y_1^\decT$. Neither of these relaxations asymptotically worsens the parameters obtainable in this proof, but they have the advantage of significantly simplifying presentation.}
\item We set $\cT$ to the indices corresponding to unerased bits in $[\decT+1:n]$ (which is of size at least $n\capgap/2$ by Lemma~\ref{lem:prefix}). Part \ref{lem:props} of Lemma~\ref{lem:coh} shows that with probability at least $1-2^{-\frac{\capgap^2 }{4}n\log\log(n)}$, the size of the set $\bigcup_{k=1}^K \chunk{m_1}{k}\cap \chunk{m_2}{k} \cap \cT$ is at most $\frac{|\cT|}{2}$. Hence for any $(m_1,m_2)$ the set 
	\begin{align}
	\left| \bigcup_{(k,k')\in[K]\times[K], k\neq k'} \chunk{m_1}{k}\cap \chunk{m_2}{k'} \cap \cT \right| 
		\ge
		\frac{ |\cT| }{ 2 }
		\ge \frac{ n\capgap }{ 4 }.
	\end{align}
But there are at most $K^2-K$ values for the pair $(k,k') \in [K] \times [K]$
such that $k \neq k'$. Hence for at least one such pair, the size of
$\chunk{m_1}{k}\cap \chunk{m_2}{k'} \cap \cT$ is at least $\frac{\capgap
n}{4(K^2-K)}$, as required by \eqref{eq:decoder}.  \end{enumerate}
\end{proof}

%\sidj{Usual disclaimers about quantization effects, logarithms being binary, etc, etc.}

%To prove the next lemma about the correctness of ML decoding, we need to define a few additional parameters that will allow us to write a more simplified bound. 
%For the set $\mc{Q}$, let 
%	\begin{align}
%	\KLmin &= \min_{q,q' \in \mc{Q}} D( q \| q' ) \\
%	\rmin &= \min \left\{ \log \frac{1-q}{q} : q \in \mc{Q} \right\} \\
%	\rmax &= \max \left\{ \log \frac{1-q}{q} : q \in \mc{Q} \right\}.
%	\end{align}

\begin{lemma} \label{lem:mldecoder}
For a code satisfying the two conditions of Lemma~\ref{lem:code_prop},
%For any $\capgap > 0$, there exists a $c > 0$ such that for sufficiently large
%$n$, if for all $m' \in \mc{L}$ from \eqref{eq:list-size} we have
%(\ref{eq:decoder}), then 
there exists a constant $c$ such that for sufficiently large $n$, with probability at least $1 - \exp\left( - \capgap \beta n^{1/2}{/ 2\log^2 n } \right)$ over the encoder noise $\bZ$,
the decoder outputs the transmitted message $m$.
\end{lemma}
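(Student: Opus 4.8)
The plan is to show that, conditioned on the good-codebook event of Lemma~\ref{lem:code_prop} and on the noise event $\wt(\bZ) \le n^{3/4}$ of Lemma~\ref{lem:noise}, the decoder's three stages succeed with the stated probability over $\bZ$. First, since $\wt(\bZ) \le n^{3/4}$, the transmitted codeword $\bX = \bu(m)\oplus\bZ$ disagrees with $\bu(m)$ in at most $n^{3/4}$ positions, so on the unerased prefix the true message $m$ satisfies the list-membership test \eqref{eq:list-size} (the count of mismatches is at most $\wt(\bZ) < n^{3/4}$); hence $m \in \mc{L}$. By Lemma~\ref{lem:code_prop}.1 the list has size at most $(\log\log n)\capgap/2$, so the list-disambiguation stage runs over at most $O((\log\log n)^2)$ pairs, and by Lemma~\ref{lem:code_prop}.2 every pair $(m_1,m_2)$ admits a valid disambiguation set $V_{m_1,m_2}$ with $|V_{m_1,m_2}| \ge \capgap n/(4(K^2-K)) = \Theta(n/\log^2 n)$. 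It then suffices to prove that in each pairwise contest between the true message $m$ and any other list member $m'$, the likelihood-ratio rule \eqref{eq:mlrule} declares $m$ the winner; a union bound over the $O((\log\log n)^2)$ contests and the crude error bound below then yields a Condorcet winner equal to $m$, and the decoder outputs $m$.

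The core of the argument is the pairwise analysis. Fix a contender $m'$ and the disambiguation set $V_{m,m'}$; it lies in the suffix $[(\decT+1):n]$, and by construction it is contained in $\chunk{m}{k_1}\cap\chunk{m'}{k_2}$ with $k_1\ne k_2$ — this is exactly where the two messages use \emph{different} noise levels $q_{k_1}\ne q_{k_2}$, which is what makes them distinguishable. Restricting further to $V$ (the larger of $V_0,V_1$, so $|V|\ge|V_{m,m'}|/2$): on $V$, the transmitted bits are $\bu_V(m)\oplus\bZ_V$ with $\bZ_V\sim\bern(q_{k_1})$ i.i.d., so $\alpha(m)=\dH(\bY_V,\bu_V(m))$ concentrates around $q_{k_1}|V|$, whereas $\bY_V$ relative to $\bu_V(m')$ looks like noise at level $\approx 1/2$ on $V_1$ (where $\bu(m),\bu(m')$ disagree) — but we have chosen $V$ so that it is either $V_0$ (the two base codewords agree, so $\alpha(m')\approx q_{k_1}|V|$ as well, but then the likelihood ratio is governed purely by $q_{k_1}$ vs $q_{k_2}$ evaluated at the \emph{same} count $\alpha\approx q_{k_1}|V|$, and since $q_{k_1}$ is the true parameter the log-likelihood ratio is positive in expectation with gap $\Omega(|V|\,\mathrm{KL}(q_{k_1}\|q_{k_2}))$) or $V_1$ (the base codewords disagree, and then $\alpha(m')\approx(1-q_{k_1})|V|$, putting $m'$'s apparent noise level near $1$, which under either $q_{k_1}$ or $q_{k_2}<1/2$ is far less likely than $m$'s count). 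In both cases the Chernoff/Hoeffding deviation of $\alpha(m)$ (and, where relevant, $\alpha(m')$) from its mean is the only thing that can flip the sign of the log-likelihood ratio, and the required deviation is a constant fraction of the expected gap. Since $q_{k_1}\ge q_1 = n^{-1/2}$ and the KL-type gap between distinct levels is at least of order $q_{k_1}$ (as consecutive $q_k$ differ by a factor $2$), the expected log-likelihood gap is at least of order $q_{k_1}|V|/\mathrm{polylog}(n) \gtrsim \capgap n^{1/2}/\log^2 n$ in the worst case $k_1=1$; a Chernoff bound on $\mathrm{Binomial}(|V|,q_{k_1})$ then gives failure probability $\exp(-\Omega(\capgap n^{1/2}/\log^2 n))$, matching the claimed bound with an appropriate constant $\beta$.

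The main obstacle — and the step I would spend the most care on — is the worst-case bookkeeping over the noise-level index $k_1$: the confidence of the likelihood test degrades as $q_{k_1}\to q_1 = n^{-1/2}$, because both the mean count $q_{k_1}|V|$ and the per-coordinate KL separation shrink, so the exponent is dominated by the smallest level, giving only $n^{1/2}$-type (not linear-in-$n$) concentration; one must check that $K=(1/4)\log_2 n$ and $|V| = \Omega(n/\log^2 n)$ are chosen so that $q_1|V|$ is still $\omega(1)$ — indeed $\gtrsim n^{1/2}/\log^2 n \to \infty$ — so that the Chernoff bound is meaningful, and that the $q_{k_1}^{\alpha}(1-q_{k_1})^{|V|-\alpha}$ likelihoods do not degenerate when $\alpha$ is as small as $\Theta(n^{1/2})$. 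A secondary subtlety is handling the case $V=V_0$: here the two base codewords \emph{agree} on all of $V$, so the received bits carry no information distinguishing $\bu(m)$ from $\bu(m')$ as sequences — the test instead distinguishes $q_{k_1}$ from $q_{k_2}$, and one must verify the decoder's rule \eqref{eq:mlrule}, which compares $q_{k_1}$-likelihood of $\alpha(m)$ against $q_{k_2}$-likelihood of $\alpha(m')=\alpha(m)$, indeed favours the true level; this reduces to $\mathrm{KL}(q_{k_1}\|q_{k_2})$ being positive and the empirical count being within a constant factor of its mean, which is again the same Chernoff estimate. Taking $\capgap$ small and $n$ large throughout absorbs the lower-order terms, and a final union bound over the at most $(\log\log n)^2$ pairwise contests (which only costs an extra $\log\log n$ factor, negligible against $\exp(-\Omega(n^{1/2}/\log^2 n))$) completes the proof.
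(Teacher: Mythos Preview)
Your proposal is essentially the paper's own argument. Both proofs (i) observe that the disambiguation set $V\subset \chunk{m}{k_1}\cap\chunk{m'}{k_2}$ forces the contest to be a binary hypothesis test between noise levels $q_{k_1}$ and $q_{k_2}$ (or $q_{k_1}$ versus $1-q_{k_2}$ on $V_1$), (ii) identify $V=V_0$ with $|k_1-k_2|=1$ as the worst case, (iii) lower-bound the relevant KL separation by $\beta q_{k_1}\ge \beta n^{-1/2}$, (iv) turn this into an error exponent of order $|V|\cdot q_{k_1}=\Theta(\capgap n^{1/2}/\log^2 n)$, and (v) union-bound over the $O(\log\log n)$ competitors. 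The only cosmetic differences are that the paper phrases step (iv) via Sanov's theorem applied at the threshold type $\zeta^*$ together with an explicit Taylor-expansion bound $D(\lambda r\|r)\ge \beta r$, whereas you phrase it as ``expected log-likelihood gap plus Chernoff''; and the paper union-bounds over $|\mc{L}|$ contests (those involving $m$) rather than all $\binom{|\mc{L}|}{2}$ pairs --- an inconsequential overcount on your side.
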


%\begin{lemma} \label{lem:mldecoder}
%For any adversarial erasure pattern $\be$ and $\capgap > 0$, conditioned on the events in Lemmas \ref{lem:zweight}, \ref{lem:prefix},  \ref{lem:props}, and \ref{lem:code_prop}, with probability at least $1 - \exp\left( - \capgap \cdot \Omega\left( n^{1/2} \frac{\log \log n}{ \log^3 n } \right) \right)$ over the encoder noise $\bZ$, the decoder outputs the transmitted message $m$.
%\end{lemma}

\begin{proof}
Suppose $m$ was transmitted and consider the test in the decoding rule for $m_1 = m$ and $m_2 = m' \ne m$.  Let $q = q_{k_1}$ and $q' = q_{k_2}$. Let $\zeta$ denote the fraction of 1's in $\bZ_V$ (i.e. its type).  If $V = V_0$ the decoder is a the maximum likelihood detector with $|V|$ observations between hypotheses $Z_i \sim \bern(q)$ and $Z_i \sim \bern(q')$. Message $m$ beats $m'$ if \cite[(11.194)]{cover_elements_2012}:
	\begin{align}
	D( \zeta \| q' ) - D( \zeta \| q) 
	&= \zeta \log \frac{q}{q'} + (1 - \zeta) \log \frac{1 - q}{1 - q'} \\
	&> 0.
	\end{align}
If $V = V_1$ it is between $Z_i \sim \bern(q)$ and $Z_i \sim \bern(1 - q')$, so $m$ beats $m'$ if
	\begin{align}
	D( \zeta \| 1 - q' ) - D( \zeta \| q) 
	&= \zeta \log \frac{q}{1-q'} + (1 - \zeta) \log \frac{1 - q}{q'} \\
	&> 0.
	\end{align}
In both cases we can solve for the $\zeta^*$ at the threshold (where the left side equals $0$). By Sanov's Theorem~\cite[Theorem 11.4.1]{cover_elements_2012}, the probability of error is
	\begin{align}
	\P( \text{$m'$ beats $m$} ) &\le |V| \exp\left( - |V| D(\zeta^* \| q) \right).
	\end{align}
Thus we must lower bound the divergence in both cases. Since $q,q' \ll 1/2$ it is clear that the case $V = V_0$ will have a smaller upper bound, so we focus on that case. For $V = V_0$ the error is largest when the hypotheses are closest, so $|k_1 - k_2| = 1$. 

We first prove a useful lower bound on divergences. Using Taylor expansion, for $r \in (0,1)$ and $\lambda > 0$ such that $\lambda r < 1$,
	\begin{align}
	D( \lambda r \| r ) &= \lambda r \ln \lambda + (1 - \lambda r) \ln \frac{ 1 - \lambda r }{ 1 - r } \\
	&= \lambda r \ln \lambda + (1 - \lambda r)
		 \left( \sum_{j=1}^{\infty} \frac{r^j}{j} - \sum_{j=1}^{\infty} \frac{\lambda^j r^j}{j} \right) \\
	&= \lambda r \ln \lambda 
		+ (1 - \lambda r) r
		+ (1 - \lambda r) \sum_{j=2}^{\infty} \frac{r^j}{j}
		- \lambda r
		+ \sum_{j=2}^{\infty} \left( \frac{1}{j-1} - \frac{1}{j} \right) \lambda^j r^j \\
	&> r ( \lambda \ln \lambda - \lambda + 1) - \lambda r^2.
	\end{align}
Now, $\lambda \ln \lambda - \lambda + 1 = 0$ at $\lambda = 1$ and
	\begin{align}
	\frac{d}{d \lambda} ( \lambda \ln \lambda - \lambda + 1) = \ln \lambda
	\end{align}
so the coefficient of $r$ is strictly positive for all $\lambda \ne 0,1$. Thus
for sufficiently small $r$, for any $\lambda > 0, \,\lambda \neq 1$ 
there exists a $\beta > 0$ such that $D( \lambda r \| r ) \ge \beta r$.

Now we will apply this to our divergence for the threshold. We either have $q' = q/2 < \zeta^* < q$ or $q < \zeta^* < q' = 2q$, which means $r < \zeta^* < 2r$ for $r = q/2$ or $q$, and $D( \zeta^* \| r) = D( \zeta^* \| 2r)$. Therefore either $|\zeta^* - r| > r/2$ or $| 2r - \zeta^*| > r/2$, which means that $D( \zeta^* \| r) > D( 3r/2 \| r)$ or $D( \zeta^* \| 2r ) > D( 3r/2 \| 2r)$. In either case, the previous argument shows that there exists a $\beta > 0$ such that $D( \zeta^* \| r ) \ge \beta r$. Therefore
	\begin{align}
	D(\zeta^* \| q) \ge \beta q \ge \beta n^{-1/2}.
	\end{align}

%$q' = q/2 < \zeta^* < q$ then either $\zeta^* \le 3 q /4$ and $D(\zeta^* \| q) \ge D( 3q/4 \| q)$ or $\zeta^* > 3q/4$ and $D(\zeta^* \| q) = D(\zeta^* \| q/2) > D( 3q/4 \| q/2)$. Similarly, if $q < \zeta^* < q' = 2q$, then either $\zeta^* \ge 3q/2$ and $D(\zeta^* \| q) \ge D( 3q/2 \| q)$ or $\zeta^* < 3q/2$ and $D(\zeta^* \| q) = D( \zeta^* \| 2q ) \ge D(3 q/2 \| 2q)$. Thus in all cases $D(\zeta^* \| q) \ge D( \lambda r \| r)$ for $r = q/2$, $q$ or $2q$ and some $\lambda > 0$. Therefore there exists a $\beta > 0$ such that

%\bikd{I did not understand the above argument, though I agree with the
%conclusion through the following argument:
%%
%Now, we take $\zeta^* = (q+q')/2$. Since we assumed $|k_1-k_2|=1$, either
%$q'=q/2$ or $q'=2q$. So, $\zeta^* = (3/4)q$ or $(3/2)q$. So, there exists a
%$\beta>0$ such that for large enough $n$,
%}

Let $\mc{E}^c$ be the event that the conditions in Lemmas \ref{lem:zweight}, \ref{lem:prefix},  \ref{lem:props}, and \ref{lem:code_prop} hold. Taking a union bound over all messages $m'$ in the list, we use the fact that with
\begin{align}
        \P( \text{any $m' \ne m$ beats $m$} | \mc{E}^C=c) &\le |\mc{L}| |V| \exp\left( - |V| D( \zeta^* \| q) \right) \\
        &= (\log \log n) \frac{\capgap n}{\log^2 n} \exp\left( - \capgap \beta n^{1/2}{/ \log^2 n } \right)\\
        &\leq \exp\left( - \capgap \beta n^{1/2}{/ 2\log^2 n } \right).
        \end{align}
\end{proof}

These lemmas together imply Theorem~\ref{th:stochastic} as argued below.

%\begin{theorem}
%The capacity of the delay-$1$ adversarial erasure channel is $1 - p$.
%\end{theorem}

{\em Proof of Theorem~\ref{th:stochastic}:}
The converse follows by considering an adversary that erases the first $pn$ bits.
Fix $\capgap > 0$ and set $R = 1 - p - \capgap$. Fix any erasure pattern $\be$. By Lemma \ref{lem:code_prop}, with probability at least \eqref{eq:prob:code_prop} the list $\mc{L}$ contains at most  $\log \log (n) \capgap/2$ codewords if $\wt(\bZ) < n^{3/4}$ and for each $(m,m')$ in the list there exists a set $V(m,m',k,k')$ of size at least $\frac{\capgap n}{4 (K^2 - K)} = O(\capgap n/\log^2 n)$. We union bound over all erasure patterns $\be$ and messages $m$ to show that with high probability the code construction satisfies these conditions. To complete the proof, note that by Lemma \ref{lem:zweight} the weight of $\bZ$ is such that the list size is at most $\log\log(n)\capgap/2$ with probability $1 - \exp(- n^{1/2}/2)$. Thus from Lemma \ref{lem:mldecoder}  decoding succeeds with probability $1 - \exp(c \epsilon n^{1/2} \log \log n / \log^3 n)$ for some $c > 0$. Therefore the probability of error goes to $0$ as $n \to \infty$, showing that $R$ is achievable.
This completes the proof of Theorem~\ref{th:stochastic}.

\section{Deterministic codes: Proof of Theorem~\ref{th:deterministic}}

In this section we show that the stochastic nature of our code design is essential. Specifically, we show that
any series of $(n, 2^{nR})$ {\em deterministic} codes $(\renc_n,\ddec_n)$ (i.e., for which $\renc_n : [2^{nR}] \to \cX^n$ depends only on $m \in [2^{nR}]$) that allow communication over our channel model with average probability of error $\e_n $ tending to zero must satisfy $R \leq1-2p$. An example illustrating our proof appears at the end of the section (see Figure \ref{fig:attack})

%\begin{theorem}
%\label{th:deterministic}
%The communication rate of the causal binary erasure channel with delay of 1 bit in which the adversary can erase at most a $p$ fraction of the bits transmitted under deterministic codes is $1-2p$.
%\end{theorem}

%\begin{proof}
We show, by presenting an adversarial strategy, that for any constant $\delta >0$ and sufficiently large values of $n$, any deterministic code $(\renc_n,\ddec_n)$  with  $R=1-2p+\delta$ will have average error $\e_n=\Omega(1)$ (where $\e_n$ does not depend on $n$ but will depend on $\delta$).  The adversarial strategy follows the ``wait and push strategy'' (used in \cite{dey_improved_2012,dey_upper_2013} for the causal binary bit-flip channel and in \cite{bassily_causal_2014} for the erasure case) in which the adversary ``waits" a certain amount of time without performing any action, and then based on the information the adversary has seen so far ``pushes'' (i.e., corrupts) the transmitted codeword in a malicious manner causing a decoding error with some probability. 

For a given message $m$ and time parameter $\ell$, let $\renc_\ell(m)$ be the set of messages that have corresponding codewords that agree with $\renc(m)$ on the first $\ell$ entries.
The set $\renc_\ell(m)$ plays an important role in our analysis and will be referred to as the ``$\ell$-consistency'' set.
Notice that Calvin {\em cannot} construct $\renc_\ell(m)$ after $\ell$ bits of $\renc(m)$ have been transmitted (as, due to the delay, he has no knowledge of the $\ell$'th bit in  $\renc(m)$).
However, as the delay of Calvin is only 1-bit, at each time step $\ell$, Calvin will can construct two {\em potential} consistency sets.
The set $\renc^0_\ell(m)$ corresponding to the case that the $\ell$'th bit transmitted is 0 and one set $\renc^1_\ell(m)$ corresponding to the case that the bit is 1. It holds that 
$\renc^0_\ell(m) \cup \renc^1_\ell(m) = \renc_{\ell-1}(m)$.

We start by defining (and analyzing) the ``wait'' phase of Calvin.  We will then turn to discussing the push phase.

\subsection{``Wait'' phase}

In the wait phase Calvin proceeds as follows:
%\sidj{in the analysis below, the assumption was that the wait-1 phase was ``at least $(1-2p+\delta)n$'' -- i think it should be exactly $(1-2p+\delta)n + 1$. if so, there'll be minor changes in the later analysis (haven't traced through).}
\begin{enumerate}
\item {\it (Wait-$1$):} Calvin starts by waiting until $(R-\delta)n=(1-2p+\delta)n + 1$ bits of the transmitted codeword are sent.
\item For each value of $\ell > (1-2p+\delta)n$, on transmission of the $\ell$'th bit of the transmitted codeword, Calvin constructs the sets $\renc^0_\ell(m)$ and $\renc^1_\ell(m)$.
Let $A_\ell = \max(|\renc^0_\ell(m)|,|\renc^1_\ell(m)|)$ and $a_\ell = \min(|\renc^0_\ell(m)|,|\renc^1_\ell(m)|)$. Clearly $A_\ell \geq a_\ell$. In addition, $A_\ell+a_\ell$ is exactly $\renc_{\ell-1}(m)$ and we will show shortly that with high probability over messages $m$ it holds that for $\ell=(1-2p+\delta)n$ the size $A_\ell+a_\ell$ is at least $2^{\Theta(n)}$. Based on the value of $A_\ell+a_\ell$ Calvin decides to either continue waiting or to move on to the push phase.
Specifically:
\begin{itemize}
\item {\it (Wait-$2$):} If $A_\ell+a_\ell$ is greater than $\delta' n$, Calvin does nothing and waits for the next bit to be transmitted. Here $\delta'=\delta /4$.
\item {\it (Attack):} If $A_\ell+a_\ell$ is less than $\delta' n$ but at least as large as $\frac{c}{\delta}$ Calvin, sets the ``transition time'' $\ell^*$ to equal the current value of $\ell$, stops the ``wait'' phase, and moves on to the ``push'' phase to be discussed below in Section~\ref{sec:push} in detail. 
\item {\it (Error):} If $A_\ell+a_\ell$ is less than $\frac{c}{\delta}$, set the ``transition time'' $\ell^*$ to equal the current value of $\ell$ and declare an error of Type $1$.
\end{itemize}
\end{enumerate}
By our definitions, Calvin either declares an error or will move on to the push phase at some point in time $\ell^*$.
For the latter we say that the transition to the push phase is successful for message $m$, namely the size $A_{\ell^*}+a_{\ell^*} < \delta' n$ is at least $\frac{c}{\delta}$ for a sufficiently large constant $c$ to be determined shortly. Otherwise we say that the transition has failed (there is an error of Type $1$).
We now show that with some constant probability over messages $m$, the transition to the push phase is successful (no error of Type $1$).

\begin{lemma}
\label{lemma:wait}
Let $c$ be a sufficiently large constant to be determined shortly.
Let $n$ be sufficiently large.
Let $\ell^*$ be the first point in time for which $A_{\ell^*}+a_{\ell^*} < \delta' n$.
With probability at least  $2^{-8c/\delta^2}$ over messages $m$, it holds that $A_{\ell^*}+a_{\ell^*}$ is of size at least $\frac{c}{\delta}$.
\end{lemma}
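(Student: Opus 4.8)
The plan is to track the quantity $B_\ell := A_\ell + a_\ell = |\renc_{\ell-1}(m)|$ as a function of $\ell$, starting from $\ell = (1-2p+\delta)n$. By the rate assumption $R = 1-2p+\delta$, there are $2^{nR}$ messages, and the key structural fact is that $\renc_\ell(m)$ partitions into $\renc^0_{\ell+1}(m)$ and $\renc^1_{\ell+1}(m)$ according to the $(\ell+1)$-st transmitted bit of $\renc(m)$. So $B_{\ell+1}$ is obtained from $B_\ell$ by discarding one of the two parts; the size of the discarded part is what we need to control.

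\textbf{The averaging argument.} First I would establish that at the initial time $\ell_0 = (1-2p+\delta)n$, we have $|\renc_{\ell_0}(m)| \ge 2^{\Theta(n)}$ for a large fraction of messages $m$. This follows from a pigeonhole/counting argument: the $2^{nR}$ codewords, restricted to their first $\ell_0$ coordinates, live in a space of size $2^{\ell_0} = 2^{(1-2p+\delta)n}$, but $nR = (1-2p+\delta)n$ as well — so I would actually take the ``wait'' to end slightly earlier, at $\ell_0$ chosen so that $2^{\ell_0}$ is a small constant fraction of $2^{nR}$, forcing the average bucket size (and hence, by Markov applied to the complement, the bucket size for a constant fraction of $m$) to be $2^{\Omega(n)}$. (The excerpt's ``Wait-1'' says wait until $(1-2p+\delta)n+1$ bits; I'd reconcile the off-by-one so the prefixes genuinely collide.) The precise constant doesn't matter — all that's needed is that $B_{\ell_0}$ is super-polynomial for a set of messages of constant probability.

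\textbf{The core step: one can't halve too often.} The heart of the proof is: as $\ell$ runs from $\ell_0$ up to $\ell^*$, each step multiplies $B_\ell$ by a factor $B_{\ell+1}/B_\ell \in (0,1]$, and $\ell^*$ is by definition the first time $B_\ell < \delta' n$. I want to show $B_{\ell^*} \ge c/\delta$ with probability $\ge 2^{-8c/\delta^2}$ over $m$. The idea: at the step where $B$ drops below $\delta' n$, it was previously $\ge \delta' n$, so the ratio $B_{\ell^*}/B_{\ell^*-1} \ge (c/\delta)/(\delta' n)$ is what we'd need — but a single step can only discard \emph{one} of two buckets, so $B_{\ell^*} \ge B_{\ell^*-1}/2 \ge \delta' n / 2 \gg c/\delta$ automatically?? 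That can't be the intended subtlety, since then the lemma would be trivial. The real point must be that Calvin observes with one-bit delay: at time $\ell$ Calvin knows $\renc_{\ell-1}(m)$ but not which of the two sub-buckets the true message falls into, so he cannot \emph{choose} to stop at $B_{\ell^*-1}$; the decrement from $B_{\ell-1}$ to $B_\ell$ to $B_{\ell+1}$ is forced by the codeword, and $B$ can drop from above $\delta' n$ directly to below $c/\delta$ in \emph{two} steps (one unobserved). So the bad event is: the true $m$ lies in a bucket that, over two consecutive coordinates, shrinks from $\ge \delta' n$ to $< c/\delta$. I would bound the probability of this: the number of messages in such a doubly-small bucket is at most (number of size-$\ell^*-1$ prefixes that are ``witnessed'') times $c/\delta$; summing the bucket sizes $<c/\delta$ over all prefixes gives at most $2^{\ell^*-1} \cdot c/\delta$ messages total, and one shows $2^{\ell^*-1}$ is small relative to $2^{nR}$ — but this needs $\ell^* - 1 < nR$, which is where the delicate bookkeeping with $\delta'$ and the constants $c$, the exponent $8c/\delta^2$ enters. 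I'd pin down: the probability a random $m$ is ``caught'' in a bucket jumping below $c/\delta$ is $\le (c/\delta) \cdot 2^{\ell^*}/2^{nR}$, and since the wait phase runs only while buckets exceed $\delta' n$, one can bound how far $\ell^*$ can exceed $\ell_0$ (each waiting step past $\ell_0$ requires the bucket to still be $\ge \delta' n$, and the total ``mass'' is fixed), yielding $2^{\ell^* - nR} \le$ something like $2^{-\Omega(n/(\delta' n))}$-type control, or more likely a direct count giving the stated $2^{-8c/\delta^2}$.

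\textbf{Main obstacle.} The hard part, and where I'd spend the most care, is exactly this last counting step: correctly formalizing ``the bucket $\renc_{\ell_0}(m)$ is large, and it stays $\ge \delta' n$ until the last moment, and then drops below $c/\delta$'' as an event over the uniform choice of $m$, and getting its probability down to a constant depending only on $c$ and $\delta$ (not $n$). The subtlety is that $\ell^*$ itself depends on $m$, so the events for different $m$ overlap in a correlated way; I'd handle this by fixing $\ell^*$ first (union over the $O(n)$ possible values, or noting the events for distinct $\ell^*$ are disjoint since $\ell^*$ is a function of $m$), then for each fixed $\ell^*$ bounding $\P[B_{\ell^*}(m) < c/\delta \mid B_{\ell^*-1}(m) \ge \delta' n]$ by a ratio of counts of prefixes, and finally checking that the arithmetic closes with the constant $8/\delta^2$ in the exponent. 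I expect the constant $c$ to be chosen at the \emph{end} — large enough that $c/\delta$ dominates certain lower-order terms but with $2^{-8c/\delta^2}$ still a fixed positive constant.
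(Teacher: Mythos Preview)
Your proposal contains a basic misunderstanding that derails the rest of the argument. You write that ``a single step can only discard one of two buckets, so $B_{\ell^*} \ge B_{\ell^*-1}/2$'' --- but this is false. The split $\renc_{\ell-1}(m) = \renc^0_\ell(m) \cup \renc^1_\ell(m)$ need not be balanced at all: one part can have size $1$ while the other has size $B_{\ell-1}-1$. So in a \emph{single} step the consistency set can drop from $\ge \delta' n$ to $1$. The one-bit delay is a red herring for this lemma; the issue is not ``two steps, one unobserved'' but simply that a very lopsided split is possible. Consequently your subsequent counting attempt (bounding the number of bad $m$ by $(c/\delta)\cdot 2^{\ell^*}/2^{nR}$ and trying to control $\ell^*$) is aimed at the wrong target and does not close: $\ell^*$ can be essentially anywhere up to $n$, and the number of length-$\ell^*$ prefixes is not usefully bounded.

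The paper's argument is different and much more direct. View the uniform choice of $m$ as a random walk down the code tree, where at each node of size $A_{\ell-1}+a_{\ell-1}$ you go to the child of size $A_{\ell-1}$ or $a_{\ell-1}$ with probability proportional to that size. A ``failure'' at step $\ell$ is the event that the current node has size $\ge \delta' n$ and you step into the small child of size $a_{\ell-1} < c/\delta$; the conditional probability of this is $\frac{a_{\ell-1}}{A_{\ell-1}+a_{\ell-1}} \le \frac{c/\delta}{\delta' n} = \frac{4c}{\delta^2 n}$. Since the node sizes along the path are a decreasing sequence of integers and there are at most $n$ steps, the probability of \emph{never} failing is at least $\prod_{k=1}^n\bigl(1 - \tfrac{c}{\delta(\delta' n + k)}\bigr) \ge \bigl(1-\tfrac{4c}{\delta^2 n}\bigr)^n \ge e^{-4c/\delta^2}$. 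Combined with the initial pigeonhole bound (that $B_{\ell_0} \ge 2^{\delta n/2}$ for all but a $2^{-\delta n/2}$ fraction of messages) this gives the stated $2^{-8c/\delta^2}$.
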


\begin{proof}
We first note that in \cite{langberg_binary_2009} it is shown, using the pigeonhole principle, that the probability over messages $m$ that for $\ell=(1-2p+\delta)n$ the size of $\renc_\ell(m)$ (and thus $A_\ell+a_\ell$) is at least $2^{\delta n/2}$ is at least $1-2^{-\delta n/2}$.
Let $E_1$ be the event that  Alice chooses a message $m$ for which the corresponding consistency set  $\renc_\ell(m)$ is of size at least $2^{\delta n/2}$.

We now address the probability, given $E_1$ that the transition of Calvin to the push phase has failed.
This can happen for messages $m$ only if $\renc_{\ell^*-2}(m)=A_{\ell^*-1}+a_{\ell^*-1} \geq \delta' n$ and  $\renc_{\ell^*-1}(m)=A_{\ell^*}+a_{\ell^*}$ is of size less than $\frac{c}{\delta}$. 
Or in other words, failure happens only for messages $m$ that at some point in time have consecutive consistency sets of sizes that {\em jump} from above $\delta' n$ to below $\frac{c}{\delta}$.

Consider a codeword chosen uniformly at random from the codebook of Alice (this corresponds to choosing a uniformly distributed message $m$). 
One may expose this codeword bit by bit according to the conditional probability given the choices made thus far. 
In such a process for time parameter $\ell$, if $A_{\ell-1}+a_{\ell-1}$ is at least $\delta' n$ and the value of $A_\ell + a_\ell$ is less than $\frac{c}{\delta}$, there will be a failure for Calvin with probability 
	\begin{align}
\frac{A_{\ell}+a_{\ell}}{A_{\ell-1}+a_{\ell-1}}.
	\end{align}
Notice that $A_\ell+a_\ell$ is equal to either $A_{\ell-1}$ or $a_{\ell-1}$ by our exposure process. Moreover, for sufficiently large $n$, $A_\ell+a_\ell=a_{\ell-1}$ as  otherwise $A_{\ell-1} = A_\ell+a_\ell < \frac{c}{\delta}$ which in turn implies that $A_{\ell-1}+a_{\ell-1} \leq \frac{2c}{\delta}$ in contradiction to $A_{\ell-1} + a_{\ell-1} \geq  \delta' n$. 
This implies that in such cases, the conditional probability of error at time $\ell$ is
	\begin{align}
\frac{a_{\ell-1}}{A_{\ell-1}+a_{\ell-1}},
	\end{align}
%
%$$
%\frac{ |\renc_{\ell}(m)|}{|\renc_{\ell-1}(m)|} \leq \frac{c}{\delta|\renc_{\ell-1}(m)|}
%$$
%In this case note that $|\renc_{\ell-1}(m)|=A_\ell + a_\ell >  \delta n$ and $|\renc_{\ell}(m)|$ is one of $A_\ell$ or $a_\ell$. This implies necessarily that $|\renc_{\ell}(m)|=a_\ell$ as  otherwise $A_\ell=|\renc_{\ell}(m)| < \frac{c}{\delta}$ which in turn implies that $A_\ell+a_\ell \leq \frac{2c}{\delta}$ in contradiction to $A_\ell + a_\ell >  \delta n$. Here we use the fact that we are considering sufficiently large values of $n$.
%
or equivalently, in such cases the conditional  probability that the exposure process does {\em not} induce a failed transition is 
	\begin{align}
1-\frac{a_{\ell-1}}{A_{\ell-1}+a_{\ell-1}}.
	\end{align}
We conclude that the probability $q$ over codewords (i.e., messages $m$) that the transition is successful for Calvin is
	\begin{align}
q=\prod_{\ell} \left(1-\frac{a_{\ell-1}}{A_{\ell-1}+a_{\ell-1}} \right),
	\end{align}
where the product is over $\ell$ for which (as specified above) $A_{\ell-1}+a_{\ell-1} \geq \delta' n$ and $ 1 \leq a_{\ell-1} \leq \frac{c}{\delta}$.
As there can be at most $n$ such values of $\ell$ we have that 
	\begin{align}
q \geq \prod_{k=1}^n \left(1-\frac{c}{\delta x_{k}} \right),
	\end{align}
where $x_{\ell}$ is a strictly decreasing sequence of integers greater than $\delta' n$.
It now holds that the setting for which our lower bound on $q$ is minimum is that in which $x_k$ are consecutive integers (in increasing order) starting from $x_{n}=\delta' n+1$.
I.e. $x_k = \delta' n + k$.
We conclude that (for sufficiently large values of $n$) $q$ is bounded from below by $e^{-\left ( \frac{4c}{\delta^2} \right )}$\footnote{A tighter analysis indicates a better lower bound of $\left( \frac{\delta}{8}\right)^{\frac{c}{\delta}}$ -- due to the intricacy of this analysis we omit it here.}:
\begin{align*}
q & \geq \prod_{k = 1}^{n}\left(1-\frac{c}{\delta (\delta' n + k)} \right) \geq \prod_{k = 1}^{n}\left(1-\frac{c}{\delta \delta' n } \right) =  \left(1-\frac{4c}{\delta^2 n } \right)^n \geq e^{-\left ( \frac{4c}{\delta^2} \right )}.
\end{align*}

%\begin{align*}
%q & \geq \prod_{k = 1}^{n}\left(1-\frac{c}{\delta (\delta' n + k)} \right)\\
%& = \left(1-\frac{c}{\delta\delta' n+\delta}\right)\left(1-\frac{c}{\delta\delta' n +2\delta}\right)\dots\left(1-\frac{c}{\delta\delta' n+\delta n}\right)\\
%& = \prod_{i=0}^{\frac{\delta n}{c}-1}\left(1-\frac{c}{\delta\delta' n+ ic + \delta}\right)\left(1-\frac{c}{\delta\delta' n + ic +2\delta}\right)\dots\left(1-\frac{c}{\delta\delta' n +(i+1)c}\right)\\
%& = \prod_{j=1}^{\frac{c}{\delta}}\left(1-\frac{c}{\delta\delta' n+ j\delta}\right)\left(1-\frac{c}{\delta\delta' n + c+ j\delta}\right)\dots\left(1-\frac{c}{\delta\delta' n +\left(\frac{\delta n}{c}-1\right)c +j\delta}\right)\\
%& = \prod_{j=1}^{\frac{c}{\delta}}\left(\frac{\delta\delta' n -c + j\delta}{\delta\delta' n+ j\delta}\right)\left(\frac{\delta\delta' n + j\delta}{\delta\delta' n + c+ j\delta}\right)\dots\left(\frac{\delta\delta' n +\delta n - 2c +j\delta}{\delta\delta' n +\delta n - c +j\delta}\right)\\
%& = \prod_{j=1}^{\frac{c}{\delta}}\left(\frac{\delta\delta' n -c + j\delta}{\delta\delta' n +\delta n - c +j\delta}\right) \geq \prod_{j=1}^{\frac{c}{\delta}}\frac{\delta'}{2} = \left( \frac{\delta}{8}\right)^{\frac{c}{\delta}}
%\end{align*}
All in all, using the union bound with event $E_1$, for sufficiently large $n$ we have with probability at
least $e^{-\left ( \frac{4c}{\delta^2} \right )} - 2^{-\delta n/2} \geq
2^{-\left ( \frac{8c}{\delta^2} \right )}$ that Calvin's transition to the push phase will result in
a success.  \end{proof}

%All in all, using the union bound with event $E_1$, we have with probability at
%least $\left( \frac{\delta}{8}\right)^{\frac{c}{\delta}} + 2^{-\delta n/2} =
%\delta^{O(1/\delta)}$ that Calvin's transition to the push phase will result in
%a success.  \end{proof}

\subsection{``Push'' phase}\label{sec:push}
Calvin's corrupting algorithm now proceeds as follows. 
\begin{enumerate}
\item Calvin chooses a ``plausible transmission'' $\bX'$ uniformly at random from $\renc^0_{\ell^*}(m) \cup \renc^1_{\ell^*}(m)$.
\item For each value of $\ell \geq \ell^*$, either $|\renc^0_\ell(m)|\geq 1$ and
$|\renc^1_\ell(m)| \geq 1$ (Calvin has uncertainty about $X_\ell$, since it
is possible for $X_\ell$ to equal either $0$ or $1$), or $|\renc^i_\ell(m)|
=0$ for some $i \in \{0,1\}$ (Calvin is certain about
$X_\ell$, since all surviving codewords have $\bX_\ell = 1-i$).  Calvin
does the following:
\begin{enumerate}
\item {\it (Calvin uncertain about $\bX_\ell$):} If $|\renc^0_\ell(m)|\geq 1$ and $|\renc^1_\ell(m)| \geq 1$, then Calvin erases $\bX_\ell$.
\item {\it (Calvin certain about $\bX_\ell$):} If $|\renc^i_\ell(m)| =0$
for some $i$, 
\begin{enumerate}
\item  {\it (Erasing disambiguating information):} If $\bX'_\ell = i$ and hence $\bX_\ell \neq \bX'_\ell$, Calvin erases $\bX_\ell$.
\item {\it (No action):} If $\bX'_\ell = 1-i$ and hence $\bX_\ell = \bX'_\ell$, Calvin does not erase $\bX_\ell$.
\end{enumerate}
\end{enumerate}
\end{enumerate}
 
If Calvin can successfully continue the above process until the end without
violating his total erasure budget, then clearly both the codewords
$\bX$ and $\bX^\prime$ are consistent with the vector received by Bob as
all the indices where they differ are erased by Calvin.
We will now argue that Calvin can indeed complete this process, that is,
the total number of erasures required is $\leq pn$.

Erasures are introduced by Calvin in steps 2(a) and 2(b)ii. 
Let us consider the full binary tree of depth $n$ where the edges are 
labeled by $0$ and $1$, and let us consider the code as its subtree.
Here, each path from the root to a leaf represents the codeword that
is composed of the bits labeling the branches along that path.
From Calvin's perspective, at the beginning of the push phase, the encoder
state is either of the two nodes representing the subsequences $\bX^{\ell^*-1}0$
and $\bX^{\ell^*-1}1$. All the paths via these two nodes represent the two
sets of codewords $\renc^0_{\ell^*}(m)$ and $\renc^1_{\ell^*}(m)$ respectively.
Since the total number of such paths is at most $\delta' n=\delta n/4$, 
there are at
most $\delta n/4$ branchings in the subtree rooted at the node corresponding to 
$\renc_{\ell^*-1}(m)$ (i.e., the subtree spanning the codewords in $\renc^0_{\ell^*}(m)$ and $\renc^1_{\ell^*}(m)$).
This implies, that along any path in this subtree Calvin will encounter at most $\delta n/4$ branching nodes.
In other words, Calvin will
encounter step 2(a) at most $\delta n/4$ times. This upper bounds the total 
number of erasures due to step 2(a) by $\delta n/4$.

Counting the number of required erasures in step 2(b)ii can be done following similar
analysis as in~\cite{langberg_binary_2009}, using the Plotkin bound and Turan's theorem.
We briefly reprise the analysis here. We consider the codebook of length 
$n' \leq (2p-\delta )n$ formed by the completions of $\bX^{\ell^*-1}$.
Let us consider the graph with these codewords as nodes, and two codewords
connected if their Hamming distance is at most $d=pn-\delta n/4$.
By the Plotkin bound, any independent set in this graph has at most 
$4p/\delta$ nodes. This implies, by Turan's theorem, that the average degree
$\Delta$ and the number of nodes $|V|$ satisfy
\begin{align*}
& \frac{\Delta +1}{|V|} \geq \frac{\delta }{4p}.
\end{align*}
Implying that
\begin{align*}
\frac{\Delta }{|V|} \geq \frac{\delta }{8p}.
\end{align*}
Thus the probability of two randomly chosen codewords being at
a distance at most $d$ is 
\begin{align*}
\frac{|{\cal E}|}{|V|^2}& =\frac{\Delta |{\cal V}|}{2|V|^2}
 \geq \frac{\delta }{16p}.
\end{align*}
So with this constant probability, Calvin's remaining erasure budget
$pn-\delta n/4$ is sufficient to erase (in step 2(b)ii) all the positions where 
$\bX$ and $\bX'$ differ.

All in all, the success probability of Calvin is bounded by below by his success in the wait phase times that in the push phase which is a constant independent of $n$: 
	\begin{align}
\frac{\delta }{16p} \cdot 2^{-\left ( \frac{8c}{\delta^2} \right )} \geq 2^{-\left ( \frac{16c}{\delta^2} \right )}.
	\end{align}

%\mikel{Need to add discussion regarding why this does not work in stochastic case (step 2a cannot be bounded), need to add references to §Sid's figure, need to say something in model about assuming all blocklength expressions (like $pn$ and others) are integers.}
%\sidj{ToDo: talk about the fact that not too many erasures -- at most $\delta n$ erasures from Step 2(a), and the ``usual'' Plotkin bound argument about the number of erasures from Step 2(b)(ii).}

%\end{proof}

\begin{remark}
We note that the proof of Theorem~\ref{th:deterministic} does not hold for stochastic codes. While the wait phase may have an analogous analysis that fits the stochastic setting, the push phase breaks down. Specifically, a crucial part of the push phase is step 2(a) which erases any location in which there is some uncertainty on behalf of Calvin regarding the current symbol. In deterministic codes, step 2(a) may occur in only few locations, whereas  in the stochastic setting the number of branchings in the subtree rooted at the node corresponding to Calvin's view so far may be large, and thus step 2(a) may be too costly. Indeed, in our code design for the achievability proof presented in Section \ref{sec:lower}, each and every location includes a branching point.
\end{remark}

\subsection{Illustration of proof for Theorem~\ref{th:deterministic}}

\begin{figure}
\begin{center}
%\vspace{-0.3cm}
\includegraphics[scale=0.4]{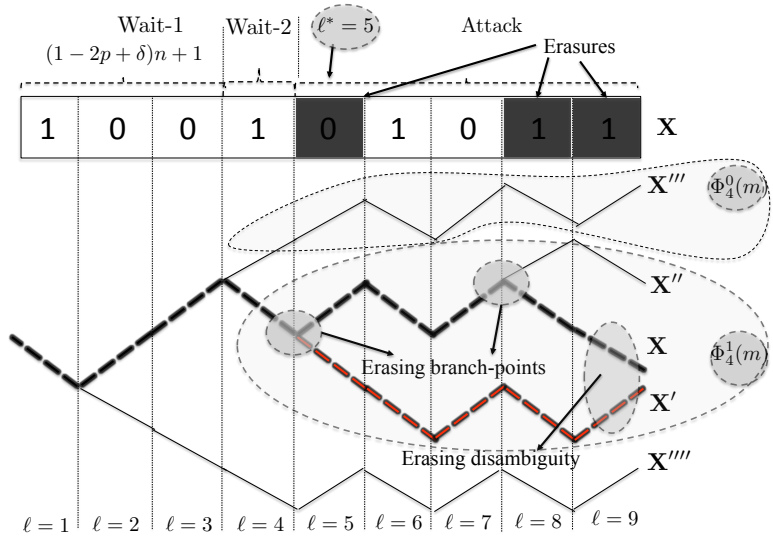}
%\vspace{-23mm}
\caption{Illustration of proof for Theorem \ref{th:deterministic}}
\label{fig:attack}
\end{center}
\end{figure}

In Figure \ref{fig:attack}, we demonstrate a toy example showing an adversarial attack against  a deterministic code of block-length $n = 9$, comprising of $5$ codewords $\bX, \bX', \bX'', \bX'''$ and $\bX''''$, and hence the rate $R$ of this code equals $\log_2(5)/9 \approx .258$. In this toy-example, $p = 4/9$ (so at most $4$ erasures are possible over the length-$9$ transmission, though in this example only $3$ bits are actually erased), and hence the claim of Theorem~\ref{th:deterministic} is that for sufficiently large $n$, no rate asymptotically larger than $1-2p = 1/9$ is achievable, implying that ``not too many more'' than $2$ messages can be reliably transmitted via a deterministic code. In particular, this example aims to show that for the specific code shown, the $5$ messages corresponding to the $5$ codewords chosen cannot be reliably transmitted. (To keep the example dimensions manageable, not all parameters in the example match those in our proofs -- in particular, no suitable value of the ``rate-excess parameter $\delta$ exists that matches those required by our proofs, for the ``small'' value of $n$ chosen.) 
The zig-zag lines at the bottom of  Figure \ref{fig:attack} show the ``code-tree'', the binary tree representing the $5$ length-$9$ codewords as paths in an (incomplete) depth-$9$ binary tree -- segments angled upwards indicate $0$'s in that location, and segments angled downwards indicate $1$'s in that location, and hence the five codewords are respectively $\bX = 100101011$, $\bX' = 100111010$, $\bX'' = 1001010001$, $\bX''' = 100001010$, and $\bX'''' = 111101010$. 

The codeword actually transmitted, $\bX$, is shown as the black-shaded path in the code-tree. Note that all codewords in the code have the same first bit ($1$), and hence Calvin has to wait until $\ell = 2$ before he knows that the transmitted codeword is not $\bX''''$. In general, Calvin's initial two phases are ``wait'' phases, in which he does not erase any bits. Specifically, Calvin is {\it always} in the ``Wait-$1$'' phase for {\it exactly} the first $(1-2p+\delta)n+1$ bits (shown in this figure as the first 3 bits), and then he continues waiting in the ``Wait-2'' phase until time $\ell^\ast$ (when the number of codewords consistent with his observations up to the time $\ell^\ast - 1$ is somewhere in the range $(c/\delta,\delta n)$, for some constant $c$ specified in Theorem~\ref{th:deterministic}). In this example the Wait-$2$ phase is of length $1$, since at time $\ell^\ast = 5$ Calvin observes the $4$th transmitted bit $x_4 = 1$, and realizes that the transmitted codeword does not equal $\bX'''$ -- hence his consistency set at this time (denoted $\renc_{4}(m)$) shrinks to become $\{\bX,\bX',\bX''\}$, and is of size $3$. At this point, Calvin segues to the ``Attack'' phase. Specifically, he first chooses a random codeword from his consistency 
set $\Phi_4(m)$ (in this example $\bX'$, denoted as the shared black-red path in the code-tree), and tries to confuse Bob between $\bX$ and $\bX'$. Specifically, whenever Calvin sees a ``branch-point'', {\it i.e.}, a location $\ell$ in which Alice may have transmitted either a $0$ or a $1$ ({\it i.e.}, in which there are codewords corresponding to both $\Phi_{\ell}^0(m)$ and $\Phi_{\ell}^1(m)$) he erases the 
corresponding bit (Step 2(a) in the push phase of Theorem \ref{th:deterministic}).
This happens in at most $\renc_{\ell^*-1}(m) < \delta n$ locations and thereby denies Bob knowledge of the value of these bits of $\bX_m$ (in this example, there are branch-points at $\ell = 5$ and $\ell = 8$). 
Also, if there is no branch-point, but $\bX_\ell \neq \bX'_\ell$ (such a bit would enable Bob to disambiguate between $\bX$ and $\bX'$), 
Calvin erases such bits as well (this happens at $\ell = 9$). At the end, both $\bX$ and $\bX'$ are equally likely from Bob's perspective. Care is required to ensure that Calvin does not run out of erasures in the Attack phase -- this is analyzed in Theorem \ref{th:deterministic} in detail.

\section{Omniscient adversary: stochastic vs. deterministic encoding}
\label{sec:stochdet}

\newcommand{\cE}{{\cal E}}
\newcommand{\cA}{{\cal A}}
\newcommand{\cM}{{\cal M}}

Here we argue that for a bit-flipping adversary who can flip
upto $p$ fraction of bits in a codeword, the capacity under stochastic
encoding is the same as that under deterministic encoding.

Suppose a rate $r$ is achievable under stochastic encoding and average
error probability. Let us suppose that there is a sequence of stochastic
codes achieving average probability of error $\epsilon_n$ for length $n$,
such that $\epsilon_n \rightarrow 0$. Let us now consider a fixed $n$,
and let $\cA_m$ denote the set of vectors for which the decoder outputs
the message $m$. Let $\cE:=\{\hat{M}\neq M\}$.
\begin{align*}
\cM_n :=\{m: Pr\{\cE|M=m\}< 1\}.
\end{align*}
For each message $m\in \cM_n$, there is a `good' codeword $\bX(m)$ such that
$Pr \{\bX (m) |M=m\} >0$, and 
the adversary does not have the power to move it outside $\cA_m$.
In other words, the ball of radius $pn$ around $\bX(m)$ is completely
contained in $\cA_m$. 
Let $\alpha_n :=Pr\{M\in \cM_n\}$. Clearly,
\begin{align*}
\epsilon_n & = \alpha_n Pr\{\cE|M\in \cM_n\} + (1-\alpha_n) Pr\{\cE|M\not\in\cM_n\}\\
& \geq 0 + (1-\alpha_n)\\
& \geq 1-\alpha_n.
\end{align*}
Let $\cC_n:= \{\bX(m):m\in \cM_n\}$ be a set of good codewords for messages
in $\cM_n$. We now argue that the sequence of deterministic codes
$\cC_n$ with decoder decision regions $\cA_m; m\in \cM_n$ have zero
error, and an asymptotic rate $r$.
That the code has zero error probability follows because, for each codeword
$\bX(m)$, the adversary does not have the power to move it outside $\cA_m$.
\begin{align*}
& H(M) \geq nr\\
\Rightarrow & H(\alpha_n) + \alpha_n H(M|M\in \cM_n) 
+ (1-\alpha_n)H(M|M\not\in \cM_n) \geq nr\\
\Rightarrow & \alpha_n H(M|M\in \cM_n) \geq nr -1 -(1-\alpha_n) H(M|M\in \cM_n)\\
\Rightarrow & \alpha_n H(M|M\in \cM_n) \geq nr -1 -\epsilon_n H(M|M\in \cM_n)\\
\Rightarrow & \alpha_n H(M|M\in \cM_n) \geq nr -1 -\epsilon_n nr\\
\Rightarrow & \frac{1}{n} H(M|M\in \cM_n) \geq \frac{1}{\alpha_n}\left((1-\epsilon_n) r -\frac{1}{n}\right)\\
\Rightarrow & \frac{1}{n} \log_2 |\cM_n| \geq \frac{1}{\alpha_n}\left((1-\epsilon_n) r -\frac{1}{n}\right)
\end{align*}
Since $\epsilon_n \rightarrow 0$ and $\alpha_n \rightarrow 1$, the rate
of $\cC_n$ converges to $r$.

\appendix

\bibliographystyle{unsrt}
\bibliography{../../mainfunction.bib}

\end{document}